\newcommand{\R}{{\mathbb R}}
\newcommand{\Z}{{\mathbb Z}}
\newcommand{\N}{{\mathbb N}}
\theoremstyle{plain}
\newtheorem{theorem}{Theorem}[section]
\newtheorem{lemma}{Lemma}[section]
\newtheorem{proposition}{Proposition}[section]
\newtheorem*{proposition*}{Proposition}
\theoremstyle{definition}
\newtheorem{corollary}{Corollary}[section]
\newtheorem{remark}{Remark}[section]
\newtheorem*{remark*}{Remark}
\newtheorem{definition}{Definition}[section]
\theoremstyle{remark}
\newtheorem{example}{Example}[section]
\newcommand{\expectation}{\mathbb{E}}
\renewcommand{\P}{\mathbb{P}}
\newcommand{\G}{\mathcal{G}}
\newcommand{\K}{K}
\newcommand{\cav}{\mathrm{Cav\,}}
\newcommand{\val}{\mathrm{val}}
\newcommand{\M}{{M}}
\newcommand{\lip}{\mathrm{lip}}
\newcommand{\I}{I}
\newcommand{\J}{J}
\newcommand{\nr}{\mathrm{NR}}
\newcommand{\sign}{{\mathrm{sgn}}\,}
\newcommand{\dist}{{{d}}}
\newcommand{\argmin}{{\mathrm{argmin}}\,}
\numberwithin{equation}{section}
\begin{document}

\title[Repeated games with incomplete information and bounded values]{On repeated zero-sum games with incomplete information and asymptotically bounded values}

\thanks{\indent \ \emph{Key words and phrases. } repeated games with incomplete information, error term, bidding games, piecewise games, asymptotics of the value\\
\indent \ $^1$ National Research University Higher School of Economics, Saint-Petersburg, Russia\\
 E-mail: \href{mailto:fsandomirskiy@hse.ru}{fsandomirskiy@hse.ru};  \href{mailto:sandomirski@yandex.ru}{sandomirski@yandex.ru} \\
\indent \ $^2$ St.~Petersburg Institute for Economics and Mathematics of Russian Academy of Sciences\\
\indent \  I am thankful to Vita~Kreps for many inspiring discussions and her care and to Misha~Gavrilovich for suggestions that significantly improved presentation of the results.\\
\indent \ Support from the Basic Research Program of the National Research University Higher School of Economics and from grants 13-01-00462,  13-01-00784, and 16-01-00269 of the Russian Foundation for Basic Research
is gratefully acknowledged.
}
%
%
%
%

\maketitle

\begin{center}
\vskip -0.7cm
{{Fedor Sandomirskiy}}$^{1,2}$
\end{center}

\vskip 0.7cm

\begin{flushright}
\emph{To the memory of Victor Domansky}
\end{flushright}

\vskip -0.0cm



%

\begin{abstract}

We consider repeated zero-sum games with incomplete information on the side of Player~2 with the total payoff given by the non-normalized sum of stage gains. In the classical examples the value $V_N$ of such an $N$-stage game is of the order of $N$ or $\sqrt{N}$ as $N\to \infty$.

Our aim is to find what is causing another type of asymptotic behavior of the value $V_N$
observed for the discrete version of the financial market model introduced by De~Meyer and Saley. For this game Domansky and independently De~Meyer with Marino found that $V_N$ remains bounded as $N\to\infty$ and converges to the limit value. This game is almost-fair, i.e.,
if Player~1 forgets his private information the value becomes zero. 

We describe a class 
of almost-fair games having bounded values in terms of an 
 easy-checkable property of the auxiliary non-revealing game. 
We call this property the piecewise property,
and it says that
 there exists an optimal strategy of Player~2 that is piecewise-constant  as a function of a prior distribution $p$. 
Discrete market models have the piecewise property.
 We show that for non-piecewise almost-fair games with an additional non-degeneracy condition $V_N$ is of the order of~$\sqrt{N}$.
\end{abstract}

\section{Introduction}
The theory of repeated games with incomplete information originated in the early sixties from 
 reports of Aumann and Maschler to the United States Arms Control and Disarmament Agency (the reports 
were published as a book:  Aumann and Maschler~(1995)~\cite{A-M}).
Their goal was to develop a game-theoretical framework for
repeated disarmament negotiations between the USSR and USA. The main feature of that interaction
 was a strategic usage of information in a dynamic framework,
that is, when selecting an action now 
one has to care about information revealed by one's action and 
its affect on future behavior of the opponent. 

It turned out that the simplest zero-sum case with
only one side having private information is already very nontrivial. 
For the introduction to the theory of such games we refer to Aumann and Maschler~(1995)~\cite{A-M}, 
 Zamir~(1992)~\cite{Zamir_handbook}, 
Sorin~(2002)~\cite{Sorin}, %
and to Mertens, Sorin and Zamir~(2015)~\cite{Bigbook} %
for further reading.

In this paper we consider the classical setting mentioned above:
zero-sum repeated games with incomplete information 
on one side. In these games two players repeatedly play the same zero-sum game which is 
selected by chance before the first stage 
of a multistage interaction according to a prior distribution 
known to both players. 
The game selected is told to Player~1 only, so he knows the game he is playing while his uninformed opponent does not know. 
Multistage setting gives to the uninformed player an opportunity to guess what is the actual stage game 
by observing previous actions of Player~1. 
In turn, informed player has to balance between two sometimes opposite goals: 
to benefit from his private information at a current stage and 
to avoid  fast revelation so as to be able to benefit in the future.  
Complexity of optimal strategies usually prevents their explicit description, 
and forces qualitative and asymptotic analyses to be the main tools to study such games.

One of the main asymptotic problems is to describe the behavior of the value $V_N$ of an $N$-stage game for large $N$. 
This problem has attracted much attention because the asymptotic behavior of the value is related to the benefit that 
the informed player can get from his information.

Throughout the paper we assume that the total payoff equals
to the sum of stage payoffs (i.e., there is neither discounting nor normalization). 
Then, as it was shown by Aumann and Maschler, under some technical assumptions $V_N$ 
grows linearly up to an error term bounded by a constant times $\sqrt{N}$
(Aumann and Maschler~(1995)~\cite{A-M}; 
Gensbittel~(2015)~\cite{CavU}
and Neyman~(2013)~\cite{Neyman} 
for recent extensions).

The linear component disappears if we assume that all the strategic advantages or disadvantages 
in a repeated game arise from  information asymmetry. Formally this means that 
the auxiliary one-stage game where nobody is informed (the so-called \emph{non-revealing game}) is fair for any prior distribution, 
i.e., has zero value. We call repeated games with this property \emph{almost-fair}. For almost-fair games
the value $V_N$ can be interpreted as the value of information possessed by Player~1,
and  
it coincides with the error term and thus does not exceed $\sqrt{N}$ by the order of magnitude. 
The growth is sublinear because Player~1 loses
the information advantage from stage to stage 
revealing his private information by past actions.

The paper deals with the problem of finding a relation between the asymptotic behavior of the value
$V_N$ of an almost-fair repeated game and the strategic properties of a non-revealing game.
 Importance of such a relation comes from the fact that the non-reveling game 
is usually easy to solve. 
Our main results are easily checkable sufficient conditions for the two kinds of asymptotic behavior of the value: 
being asymptotically bounded and $\sqrt{N}$-growth. 
The conditions are nearly opposite of  each other,
and hence our results provide an almost complete characterization of the possible asymptotic behaviors. 
This gives a general explanation of many results from the literature.    

For a long time all known almost-fair games had $\sqrt{N}$-behavior of the value. 
Almost all methods in the theory of repeated games with incomplete information originated 
as tools to analyze games of this class. The first example was constructed by Zamir~(1971)~\cite{Zamir_sqrt_is_precise} %
(we include it in Subsection~\ref{subsect_notation}). Mertens and Zamir~(1976, 1995)~\cite{MZNormalGames, MZ1995}
proved that
for a class of games containing Zamir's example ${V_N}/{\sqrt{N}}$ converges to a limit related to the normal distribution.
They used a very technical analysis of a recurrent equation
for the sequence of values $V_N$. De~Meyer~(1996A, 1996B)~\cite{DeM1996_1, DeM1996_2}
introduced a duality approach which allowed to extend the results 
of Mertens and Zamir to a broader class of games without dealing with any technicalities. 
He obtained two representations of 
the ${V_N}/{\sqrt{N}}$ limit: one from the Central Limit Theorem and 
one as a solution of a partial differential equation.
Other results were based on finding explicit solutions (see Heuer~(1991)~\cite{Heuer}, %
Domansky and Kreps~(1994, 1995, 1999)~\cite{DomKrepsEvent,DomKrepsMultinom,DomKrepsTransport}).
 Similar problems  of dynamic strategic use of information also demonstrate $\sqrt{N}$-behavior of the value. 
Mertens and Zamir~(1977)~\cite{MZvariation}
and De~Meyer~(1998)~\cite{DeM1998}
studied martingale optimization problem of the maximal $L^1$-variation
of a bounded martingale which is connected to the optimal speed of information revelation 
in games with $\sqrt{N}$-behavior (Gensbittel~(2015)~cite{CavU}
showed that any game can be reduced to a generalized problem of the maximal variation). 
Problems arising as continuous-time limits of $\sqrt{N}$-games were discussed by De~Meyer~(1999)~\cite{DeM1999}
and  Gensbittel~(2013)~\cite{Gensbittel2013}
(working with such limit problems is known as ``compact approach'', see Sorin~(2002)~\cite{Sorin}). 

An example of another asymptotic behavior of $V_N$ is given by a financial market model introduced by  De~Meyer and Saley~(2003)~\cite{DeMSaley}
to analyze impact of information asymmetry on a financial market.
In this model a risky asset is exchanged between two agents; one of them is an insider and 
knows the actual value of a risky asset, and the other does not 
(see Subsection~\ref{subsect_notation} for the definition).
Depending on existence of the minimal currency unit the corresponding almost-fair game demonstrates 
either asymptotically bounded value or $V_N$  of the order of $\sqrt{N}$, though the model is robust to change 
of other modeling assumptions:
\begin{itemize}
\item If there is no minimal currency unit De~Meyer and Saley showed that the value has $\sqrt{N}$-behavior 
and the limit price-process is related to the Brownian motion. 
Prices are the elements of strategy, and the authors argue that their result provides 
endogenous justification for the Brownian motion in finance. 
The informal reason for the appearance of the Brownian motion is that 
the insider can benefit with revealing an arbitrary small  
``portion of information'' using an arbitrary small change of his price. As $N\to\infty$ optimal ``portions'' become infinitesimally small,
and the Brownian motion arises from the Central Limit Theorem,
and $\sqrt{N}$ originates from the standard normalization. Various extensions of this model 
(De~Meyer and Marino~(2004)~\cite{DeMMar2004}, %
De~Meyer~(2010), 
De~Meyer and Fournier (2015)~\cite{DeM2015}) %
 have the same properties.

\item Domansky~(2007)~\cite{Domansky2006}   
 and independently De~Meyer with Marino~(2005)~\cite{DeMMar}
showed that introducing the minimal currency unit radically changes 
the behavior of the model: the value $V_N$ becomes bounded as $N\to\infty$. 
The informal explanation is that the minimal currency unit does not allow insider to obtain non-zero profit revealing a small ``portion of information'' because he has to change his price significantly: at least by one currency unit. This forces him to reveal information fast and to lose the information advantage in a finite number of steps collecting only a bounded total gain. This informal explanation is supported by results of Domansky~\cite{Domansky2006} 
 who explicitly solved the infinite-stage version of the game and showed that the
optimal strategy of the insider reveals his private information in finite number of stages 
and that the price-process is a simple random walk over the lattice of admissible prices with absorption 
at the true value of the asset. 
Asymptotic boundedness and fast revelation holds for different modifications of the model with minimal currency unit (Domansky and Kreps (2009, 2013, 2016)~\cite{DomKreps2009, DomKreps2013, DomKreps2016}, %
Sandomirskaia (2016)~\cite{MarExponential}). %
\end{itemize}
The market model with minimal currency unit was historically the second non-trivial example of 
an  almost-fair game with bounded values. 
However, the first such example from (Domansky and Kreps~(1994)~\cite{DomKrepsEvent}) %
is similar to market model with three admissible bids
(see Subsection~\ref{subsect_notation} for details). 

Our research was inspired by the desire to find an abstract property responsible 
for the effect of $V_N$ being bounded in an almost-fair game.
Theorem~\ref{th_upper} shows that the value is bounded if in the non-revealing game 
Player~2 has an optimal strategy that is piecewise-constant as a function of the prior distribution. 
This is what we call the \emph{piecewise} property. The market models with minimal currency unit are piecewise games (Example~\ref{ex_marketispiecewise})
thereby Theorem~\ref{th_upper} immediately implies many results cited above.
Theorem~\ref{th_lower} shows the result is sharp: if for some interval of prior distributions the optimal strategy of Player~2 is unique and non-constant, then the value is of the order of $\sqrt{N}$. 
Remark~\ref{rm_comparison} compares this condition with the weakest known sufficient condition from De~Meyer~(1996A)~\cite{DeM1996_1}. %
Up to an additional assumption of uniqueness 
the theorems give a characterization of  possible asymptotic behaviors of $V_N$ for almost-fair games. These results are stated in Section~\ref{sect_results}.
The proofs explain the strategic origin of the two behaviors: we formalize 
the intuition that relates the behavior of $V_N$ with an opportunity to benefit from an arbitrary small revelation of information (see above and Propositions~\ref{prop_no_revelations} and~\ref{prop_small_revelations_imply_sqrt}). The proof of Theorem~\ref{th_upper} is contained in Section~\ref{sect_proof1}. The approach is based on ideas
of De~Meyer and Marino~\cite{DeMMar}
that link upper bounds on the value with invariant functions of the recurrent equation and on
the explicit construction of an invariant function using geometry of the Kantorovich metric.
Theorem~\ref{th_lower} is proved in Section~\ref{sect_proof2}. For this purpose we
pass from the game to a martingale optimization problem of generalized maximal variation (Gensbittel~(2013)~\cite{CavU}).  
By analyzing optimal strategies under small perturbations of matrix games 
we show that
the value of the martingale optimization problem is bounded from below by a quantity related to the typical 
movement of a simple random walk after $N$ steps
that is of the order of $\sqrt{N}$.  
The last Section~\ref{sect_open} presents a list of open problems.

\subsection{Notation, definitions, classic results, and examples}\label{subsect_notation}

In $N$\emph{-stage zero-sum game $\Gamma_N=\Gamma_N(p)$ with incomplete information on the side of Player~2} 
the players repeatedly play the same $\I\times \J$ matrix game $A^k$ depending on a random state $k\in\K$. 
The state $k$ is selected by chance before the first stage from the set of states $K$ according to a prior distribution $p\in\Delta(\K)$ known to players ($\Delta(\K)$ denotes the set of all probability measures over $\K$). Player~1 
knows the realization of $k$, but Player~2 
does not. 
The players choose their actions $i_n\in \I$ and $j_n\in \J$ at a stage $n=1,2,..N$ taking into account their current knowledge: the history of their actions $h_n=(i_t,j_t)_{t=1}^{n-1}$ (known to both) and knowledge of $k$ (for Player~1).  
Stage payoffs $A_{i_n,j_n}^k$ are not observed during the game. The objective of Player~1~(2) is to maximize~(minimize) the expected total  payoff\footnote{Usually one considers the expected average total payoff, i.e.,~the expected sum of stage gains divided by $N$, to ensure that  the sequence of values $V_N$ is bounded as $N\to\infty$. However, we do not follow this convention as $V_N$  remains bounded in the games we are interested in without any normalization.} $G_N=\expectation\left[\sum_{n=1}^N A_{i_n,j_n}^k\right]$. Players can randomize their actions using behavioral strategies (defined below), and so the expectation is taken with respect to the joint distribution of $k$ and $h_{N+1}$. 

\emph{A behavioral strategy} $\sigma$ of Player~1 is a collection $\{\sigma_n\}_{n=1}^N$, where $\sigma_n(h_n,k)\in\Delta(\I)$ is the distribution used by Player~1 to randomize his action $i_n$ depending on his ``knowledge'' $(h_n,k)$  at a stage $n$. In a behavioral strategy $\tau=\{\tau_n\}_{n=1}^N$ of Player~2 the distribution $\tau_n(h_n)\in\Delta(\J)$ used to select $j_n$ does not depend on the state $k$ as Player~2 does not know it. Note that $G_N=G_N(\sigma,\tau)$ represents the game $\Gamma_N$ in normal form. 
By Kuhn's theorem behavioral strategies are equivalent to mixed strategies if deterministic behavioral strategies (i.e., such that $\sigma_n$ and $\tau_n$ take values in the Dirac $\delta$-measures over $\I$ and $\J$, respectively) are considered as pure. Hence under the standard finiteness assumption of $\I,\J$, and $\K$ 
the min-max theorem applies. Therefore
the game  $\Gamma_N(p)$ has a value $V_N(p)$ and players have optimal strategies. 
The game also has a value for compact metric spaces $\I, \J$, and $\K$ with continuous $A$ by the standard approximation arguments: optimal strategies in the game
with discretized sets of actions and states induce $\varepsilon$-optimal strategies in the original one, and $\varepsilon$ goes to zero as discretization becomes finer (see Remark~\ref{rm_continuous_K} for an example of using such techniques).

The \emph{non-revealing game} $\Gamma^\nr_1(p)$ is a version of the one-stage game $\Gamma_1(p)$ where both players do not know the state $k$. 
This game is equivalent to $I\times\J$ matrix game with expected payoff matrix $A^p=\expectation A^k$. Hence $A^p=\sum_{l\in\K} p_l A^l$ for finite $\K$ (here $p_l=\mathbb{P}(\{k=l\})$ is the weight that $p$ gives to an element $l\in\K$). 
The value of the non-revealing game is traditionally denoted by $u(p)$. 

The $\cav[u]$-theorem of Aumann and Maschler shows that the the non-revealing game is responsible for the leading term of repeated game's value as $N\to\infty$: 
\begin{equation}
\label{eq_cavu}
V_N(p)=N\cdot \cav[u](p)+O(\sqrt{N}),\quad  N\to\infty,
\end{equation}
 in the case of finite $\I,\J$, and $\K$. Here $\cav[u]$ denotes the least concave majorant of $u$ treated as a function  $\Delta(\K)\to\R$.

We say that a repeated game $\Gamma_N$ is \emph{almost-fair} if the value $u(p)$ of the non-revealing game is zero for any $p\in\Delta(\K)$. In other words, almost-fair game is a game that becomes fair if Player~1 forgets $k$.
By~(\ref{eq_cavu}) the value of an almost-fair game can not grow faster\footnote{For almost-fair games with infinite $\I$, $\J$, and $\K$ the value can also grow as $N^\alpha$ with $\alpha\in(0.5,1)$, see Sandomirskiy~(2014)~\cite{Fedor_IJGT}. 
} than $\sqrt{N}$.

The following almost-fair games will be used to illustrate our results:
\begin{example}[\emph{The first game with $V_N$ of the order of $\sqrt{N}$}]\label{ex_1} 
Zamir~(1971)~\cite{Zamir_sqrt_is_precise}
showed that for the repeated game with $\I=\J=\K=\{0,1\}$ and stage games given by
\begin{equation}
\label{eq_Zamirs_example}
A^0=\begin{pmatrix}
3 & -1 \\ -3 & 1 
\end{pmatrix} \quad \mbox{and} \quad 
A^1=\begin{pmatrix}
2 & -2 \\ -2 & 2 
\end{pmatrix}
\end{equation}
the value grows as $\sqrt{N}$. Checking that the non-revealing game
$A^p=p_0A^0+p_1 A^1$ has zero value is left to the reader. 
\end{example}
\begin{example}[\emph{The first game with bounded $V_N$}]\label{ex_2} 
A repeated game is called \emph{flat} if $u(p)$ is an affine function of $p$.
Domansky and Kreps~(1994)~\cite{DomKrepsEvent}
explicitly solved all flat (``eventually revealing'' by their terminology) games with $\I=\J=\K=\{0,1\}$. They discovered that except for games similar to~\eqref{eq_Zamirs_example} this class contains almost-fair games with bounded values: if stage  games are
\begin{equation}
\label{eq_DK_example}
A^0=\begin{pmatrix}
1 & 0 \\ 0 & -(1-\alpha) 
\end{pmatrix} \quad \mbox{and} \quad 
A^1=\begin{pmatrix}
-1 & 0 \\ 0 & (1-\alpha) 
\end{pmatrix}, \quad \alpha\in[0,1],
\end{equation}
then the value $V_N(p)$ does not exceed $1/2$ for any $N$ and $p$. If $\alpha=0$, the
game $\Gamma_N(p)$ is especially simple: the optimal strategy of Player~1 does not depend on $n$ and prescribes to play ``top'' if $k=0$ and ``bottom'', otherwise. This strategy completely reveals $k$ after the first stage and leads to $V_N(p)=\max\{p_0,p_1\}$ for any $N$. For $\alpha>0$ the optimal strategy becomes $n$-dependent. Note that the degenerate case of $\alpha=0$ is equivalent to the discrete market model with $m=2$ described in Example~\ref{ex_3}.  

\end{example}

\begin{example}[\emph{The financial market model with asymmetric information}]\label{ex_3} 
A model of De~Meyer and Saley~(2003)~\cite{DeMSaley}
has $I=J=[0,1]$, the set of states $\K=\{0,1\}$, and the stage games given by $A_{i,j}^k=\sign(i-j) (k-\max\{i,j\})$. This repeated game can be interpreted as follows. The state $k$ is a liquidation value of a risky asset, and only Player~1, the insider, knows $k$. At each stage of $\Gamma_N(p)$ both players propose their prices $i_n$ and $j_n$ for the asset, and the player with higher price buys one unit of the asset from his opponent for this price. Players have enough assets and money. The objective of both players is to maximize their expected welfare after $N$ trading rounds. This game is almost-fair (by symmetry) and has $\sqrt{N}$-behavior of the value. We refer to this model as the continuous one.

Introducing the minimal currency unit $\frac{1}{m}, m\in \N,$ leads to the discrete version of the model with $I=J=\left\{0,\frac{1}{m},\frac{2}{m},..1\right\}$.
For the discrete model the value $V_N$ remains bounded as $N\to\infty$, see Domansky~(2007)~\cite{Domansky2006} 
and De~Meyer with Marino~(2005)~\cite{DeMMar}. %
\end{example}

\section{Results}\label{sect_results}
The following property turns out to be responsible for 
the sequence of values $V_N$ being bounded as $N\to\infty$.
\begin{definition}
\label{def_trigger}
We say that a repeated game $\Gamma_N$ with incomplete information is a \emph{piecewise game} if there exists a function $y^*:\ \Delta(\K)\to\Delta(\J)$ taking a finite number of different values and such that $y^*(p)$ is an optimal strategy of Player~2 in the non-revealing game $\Gamma_1^\nr(p)$ for any $p\in\Delta(\K)$.
\end{definition}
\begin{theorem}\label{th_upper}
Let $\Gamma_N$ be an {almost-fair piecewise} repeated game with incomplete information  and finite $\I,\J$, and $\K$. For any $p\in\Delta(\K)$ and $N\geq 1$
\begin{equation}
\label{eq_theorem1}
0\leq V_N(p)\leq \|A\|_\lip Q,  
\end{equation}
where $\|A\|_\lip=\max_{i,j,k,k'}\left|A_{i,j}^k-A_{i,j}^{k'}\right|$ and $Q$ is the number of different values that $y^*$ takes (see Definition~(\ref{def_trigger})).
\end{theorem}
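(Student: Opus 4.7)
The plan is to follow the invariant-function method of De~Meyer and Marino: I will construct a function $W:\Delta(\K)\to[0,+\infty)$ with $\|W\|_\infty\le\|A\|_\lip\cdot Q$ satisfying the one-step inequality
\begin{equation*}
W(p)\;\ge\;\max_{\sigma}\Bigl\{\langle\sigma,\,A\,y^*(p)\rangle_p+\sum_i\alpha_i(\sigma)\,W(p^i)\Bigr\},
\end{equation*}
where $\sigma=(\sigma^k)_{k\in\K}$ is Player~1's stage strategy, $\alpha_i(\sigma)=\sum_k p_k\sigma^k_i$ is the marginal probability of action $i$, and $p^i\in\Delta(\K)$ is the posterior on $k$ given action $i$. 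Granted such $W$, induction on $N$ starting from $V_0\equiv 0\le W$ yields $V_N\le W$; Player~2 realizes the bound by the behavioral strategy that plays $y^*(p_n)$ at each stage, with $p_n$ her Bayesian posterior against Player~1's best response.

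To exploit the piecewise property, set $u_q(p):=\max_i A^p_{i,y_q}$ for each $q=1,\ldots,Q$. Because the game is almost-fair and $y_q$ is optimal for Player~2 at every $p\in B_q:=\{p:\ y^*(p)=y_q\}$, one has $u_q\equiv 0$ on $B_q$; moreover $u_q$ is piecewise-linear convex, and the standard ``subtract a constant'' device (using $\sum_k(p_k-p'_k)=0$ for any $p'\in B_q$) gives the uniform bound $0\le u_q(p)\le\|A\|_\lip$ throughout $\Delta(\K)$. Hence, with Player~2 playing $\tau=y_{q(p)}$, the stage gain satisfies $\langle\sigma,A\,y_{q(p)}\rangle_p\le\sum_i\alpha_i(\sigma)\,u_{q(p)}(p^i)$, and the problem reduces to constructing $W$ with $\|W\|_\infty\le\|A\|_\lip\cdot Q$ and $W(p)\ge\sum_i\alpha_i\bigl(u_{q(p)}(p^i)+W(p^i)\bigr)$ for every martingale splitting $(\alpha_i,p^i)$ of $p$.

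The explicit construction of $W$ is the Kantorovich-geometry step. I would write $W=\sum_{q=1}^{Q}w_q$, where each $w_q:\Delta(\K)\to[0,\|A\|_\lip]$ is a concave majorant of $u_q$ built as a suitable regularization of $\cav[u_q]$. Concavity makes $n\mapsto w_q(p_n)$ a supermartingale, so $w_q(p)-\sum_i\alpha_i w_q(p^i)\ge 0$; the critical point is that for $p\in B_{q(p)}$ this drop dominates $\sum_i\alpha_i u_{q(p)}(p^i)$. Since $u_{q(p)}$ vanishes on $B_{q(p)}$ and is $\|A\|_\lip$-Lipschitz in the $\tv$-distance (inherited from $|A^p_{i,y_q}-A^{p'}_{i,y_q}|\le\|A\|_\lip\|p-p'\|_\tv$), Kantorovich--Rubinstein duality allows one to charge the expected stage-gain increments of $u_{q(p)}$ to the decrements of $w_{q(p)}$; summing over $q$ then gives both $\|W\|_\infty\le Q\|A\|_\lip$ and the required invariance. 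The main obstacle will be keeping each $w_q$ simultaneously bounded by $\|A\|_\lip$ and large enough to absorb the full stage-gain contribution at arbitrary splittings---the naive concave envelope $\cav[u_q]$ may fall short when posteriors span long Kantorovich distances, and one must invoke the duality explicitly. Once $W$ is constructed, the upper bound in~\eqref{eq_theorem1} follows; the lower bound $V_N(p)\ge 0$ is immediate because Player~1 can play any non-revealing strategy and secure expected payoff $\ge N\cdot u(p)=0$ in the almost-fair case.
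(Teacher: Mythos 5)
Your overall architecture is the paper's: an explicit nonnegative ``invariant'' function bounded by $\|A\|_\lip Q$, the reduction of the stage gain against $y_{q(p)}$ to $\sum_i\alpha_i\,u_{q(p)}(p^i)$ with $u_q(p)=\max_i A^p_{i,y_q}$ vanishing on the optimality region of $y_q$ and $\|A\|_\lip$-Lipschitz in total variation, and induction on the recursion to get $V_N\le W$ (this is exactly the De~Meyer--Marino monotonicity scheme combined with a Kantorovich-type Lipschitz bound; the lower bound $V_N\ge 0$ is fine too). However, the heart of the proof --- the explicit $W$ and the verification of the one-step inequality --- is exactly what you have not supplied, and the candidate you sketch does not work. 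If each $w_q$ is a concave majorant of $u_q$ bounded by $\|A\|_\lip$, e.g.\ $w_q=\cav[u_q]$, the needed inequality $W(p)\ge\sum_i\alpha_i\bigl(u_{q(p)}(p^i)+W(p^i)\bigr)$ can fail: take the game of Example~\ref{ex_2} with $\alpha=0$, where $u_L(p)=(p_0-p_1)^+$, $u_R(p)=(p_1-p_0)^+$, so $\cav[u_L](p)=p_0$, $\cav[u_R](p)=p_1$ and $W=\cav[u_L]+\cav[u_R]\equiv 1$ is affine; splitting $p=(1/2,1/2)$ into the two vertices with weights $1/2$ gives zero drop of $W$ while $\sum_i\alpha_i u_L(p^i)=1/2>0$. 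So ``a suitable regularization of $\cav[u_q]$'' plus an appeal to Kantorovich--Rubinstein duality to ``charge increments to decrements'' is precisely the unproved step, and you acknowledge as much; as it stands the proposal is a correct reduction with the key construction missing.

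The paper closes this gap by dropping the majorant requirement altogether. It takes $h(p)=\|A\|_\lip\sum_{q=1}^{Q}\bigl(1-\dist_1(p,\Delta_q)\bigr)$, where $\dist_1$ is the total-variation distance and $\Delta_q$ is the closed convex polytope of priors at which $y^q$ is optimal; each summand is concave (distance to a convex set is convex) and lies in $[0,\|A\|_\lip]$, so $0\le h\le\|A\|_\lip Q$, but $h$'s summands need not dominate the $u_q$. The one-step inequality is then obtained not by per-$q$ supermartingale absorption but by an exact cancellation: Lipschitzness of the one-shot value in the Kantorovich metric gives $\mathcal{V}_1(\mathcal{P})\le\|A\|_\lip\,\expectation_{p^{(2)}\sim\mathcal{P}}\dist_1(p^{(2)},\Delta_w)$ for the region $\Delta_w\ni p$, and this term cancels against the $q=w$ summand of $\expectation h(p^{(2)})$; Jensen's inequality (convexity of $\dist_1(\cdot,\Delta_q)$ together with $\expectation p^{(2)}=p$) handles the other summands, and $\dist_1(p,\Delta_w)=0$ lets one reinstate the $w$-term, yielding $T[h]\le h$, hence $T[h]=h$ by the increasing property of $T$ and $V_N=T^N[0]\le h\le\|A\|_\lip Q$. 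Supplying this construction (or an equivalent one) is what your argument still needs.
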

\begin{remark}
Note that only the upper bound in~(\ref{eq_theorem1}) is non-trivial since playing his optimal strategy for $\Gamma_1^\nr(p)$ at all stages of $\Gamma_N(p)$, Player~1 guarantees expected payoff of at least $0$.
\end{remark}
\begin{corollary}
Under the conditions of Theorem~\ref{th_upper} the sequence $V_N(p)$
has a finite limit  $V_\infty(p)$ as $N\to\infty$. Indeed, this 
sequence is bounded and non-decreasing  (if at the first stage of $\Gamma_{N+1}(p)$ Player~1  plays the strategy optimal in $\Gamma_1^\nr(p)$ and then plays as if the remaining game is $\Gamma_N(p)$, he gets at least $V_N(p)$).

\end{corollary}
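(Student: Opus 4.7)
The plan is to verify the hypotheses of the monotone convergence theorem for real sequences: I will show that the sequence $\{V_N(p)\}_{N\geq 1}$ is bounded above and non-decreasing in $N$, from which the existence of a finite limit $V_\infty(p)$ follows automatically. The upper bound is already in hand from Theorem~\ref{th_upper}, which gives $V_N(p) \leq \|A\|_\lip Q$ uniformly in $N$. The substantive task is therefore monotonicity, i.e., $V_{N+1}(p) \geq V_N(p)$ for every $N \geq 1$.

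To establish monotonicity, I would exhibit a strategy of Player~1 in $\Gamma_{N+1}(p)$ that guarantees an expected total payoff of at least $V_N(p)$. Following the hint in the corollary's statement, let $x^*(p) \in \Delta(\I)$ be an optimal strategy of Player~1 in the non-revealing one-stage game $\Gamma_1^\nr(p)$, and let $\sigma^*$ be an optimal strategy of Player~1 in $\Gamma_N(p)$. Define Player~1's strategy in $\Gamma_{N+1}(p)$ as follows: at stage~$1$ play $x^*(p)$ \emph{ignoring} the private state $k$, and from stage~$2$ onward play $\sigma^*$ treating stage~$2$ as the first stage of a fresh copy of $\Gamma_N(p)$. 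The key observation is that the law of Player~1's stage-$1$ action does not depend on $k$, so Bayes' rule leaves Player~2's posterior on $k$ equal to the prior $p$ after any observed stage-$1$ outcome; the continuation game is therefore strategically identical to $\Gamma_N(p)$ with prior $p$.

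Against any response $y \in \Delta(\J)$ of Player~2 at stage~$1$, the expected stage-$1$ payoff equals
\begin{equation*}
\sum_{k \in \K} p_k \sum_{i,j} x^*(p)_i\, y_j\, A^k_{i,j} \;=\; \langle x^*(p),\, A^p y\rangle \;\geq\; u(p) \;=\; 0,
\end{equation*}
by optimality of $x^*(p)$ in $\Gamma_1^\nr(p)$ together with almost-fairness. Meanwhile, the expected sum of payoffs over stages $2,\dots,N+1$ is at least $V_N(p)$ by the defining property of $\sigma^*$ together with the preservation of the posterior. Adding the two contributions yields $V_{N+1}(p) \geq 0 + V_N(p) = V_N(p)$.

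I do not foresee a real obstacle here. The only subtle point is the Bayes-rule step, but it is immediate because the stage-$1$ mixed action is drawn from a distribution independent of $k$, so observing it carries no signal about $k$. Combining monotonicity with the uniform upper bound from Theorem~\ref{th_upper} concludes the argument and in fact delivers $V_\infty(p) \leq \|A\|_\lip Q$.
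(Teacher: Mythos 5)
Your proof is correct and follows exactly the paper's own argument: the uniform bound from Theorem~\ref{th_upper} plus monotonicity obtained by having Player~1 play the non-revealing optimal strategy (independent of $k$, hence leaving the posterior at $p$) at stage one of $\Gamma_{N+1}(p)$ and then an optimal strategy of $\Gamma_N(p)$, the stage-one payoff being at least $u(p)=0$ by almost-fairness. You merely spell out the Bayes-rule and continuation-value details that the paper leaves as a parenthetical remark.
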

\begin{remark}
Theorem~\ref{th_upper} can be easily extended in two ways:
\begin{itemize}
\item The result holds for flat games (games with affine $u(p)=\sum_{k\in K} p_k \val[A^k]$) if we put the error term $V_N(p)-N u(p)$ instead of $V_N(p)$ in~\eqref{eq_theorem1}. This can be deduced from strategic equivalence of any flat game to an almost-fair game with stage payoffs given by $\widetilde A^k=A^k-\val[A^k]$.
\item The result holds for games with arbitrary compact metric space $\K$,
finite $\I,\J$, and continuous $A$ without any changes. Indeed, the right-hand side of~\eqref{eq_theorem1} does not explicitly depend on cardinality of $\K$ that allows to use standard approximation techniques. For the formal argument see~Remark~\ref{rm_continuous_K}.

\end{itemize}
\end{remark}

\begin{example}
Let us check that the game from Example~\ref{ex_2} is a piecewise game. In a non-revealing game $A^p=p_0A^0+p_1 A^1$ the optimal strategy of Player~2 is to play ``left'' if $p_1> p_0$ and play ``right'' if $p_1<p_0$. If $p_1=p_0$ any convex combination of the two actions is optimal. So 
$$y^*(p)=\left\{\begin{array}{cc} \delta_L, & p_1\geq p_0\\ \delta_R, & p_1<p_0\end{array}\right.$$
does the job (here and below $\delta_a$ denotes the Dirac $\delta$-measure 
concentrated at $a$). 
\end{example}
\begin{example} \label{ex_marketispiecewise}
It is easy to check that the discrete market model (see Example~\ref{ex_3}) has the piecewise property for all $m$. If nobody knows $k$, and the probability $p_1=\mathbb{P}(\{k=1\})$ belongs to the interval $\left[\frac{q}{m},\frac{q+1}{m}\right]$ for some $q=0,1,..m-1$, then for both players it is optimal to select $i=j=\frac{q}{m}$. Hence the pure strategy $y^*(p)=\delta_{\frac{[p_1m]}{m}}$ guarantees $0$ in the non-revealing game ($[x]$ denotes the integral part of $x\in\R$). Informally, the reason is that, if nobody knows the value $k$ of a risky asset, the optimal price in the continuous model would be the expectation $\expectation k=p_1$, but in the discrete one only discrete prices are allowed, and hence players select the closest point to $p_1$. So the convergence results from Domansky~(2007)~\cite{Domansky2006} 
and De~Meyer with Marino~(2005)~\cite{DeMMar} 
become immediate corollaries of Theorem~\ref{th_upper}.

For the continuous market model (recall that it has $V_N$ of the order $\sqrt{N}$) the optimal strategies in the non-revealing game are $i=j=p_1$. They are unique and depend continuously on the prior $p$. This suggests a form of converse to 
Theorem~\ref{th_upper}.
\end{example}

Let 
$[p',p'']$ denote the segment $\{\alpha p'+(1-\alpha)p''\mid \alpha\in[0,1]\}$ for $p',p''\in\Delta(\K)$. By $\Delta^\mathrm{relint}(\K)$ we denote the relative interior of $\Delta(\K)$, i.e.,
 the set of all $p\in\Delta(\K)$ such that $p_k>0$ for any $k\in\K$.
\begin{theorem}\label{th_lower}
Let $\Gamma_N$ be an {almost-fair} repeated game with incomplete information and finite $\I,\J$, and $\K$. Suppose that there exists a segment $[p',p'']$ such that for all $p\in[p',p'']$ the optimal strategy $y^*(p)$ of Player~2 in the non-revealing game $\Gamma_1^\nr(p)$ is unique, but $y^*(p)$ takes infinitely many different values when $p$ ranges over $[p',p'']$. Then for any $p\in\Delta^\mathrm{relint}(K)$ 
\begin{equation}
\label{eq_theorem2} 
C_1\sqrt{N}\leq V_N(p)\leq C_2\sqrt{N}, \quad N\geq1,
\end{equation}
where $C_h=C_h(p)$, $h=1,2$, are positive constants independent of $N$.
\end{theorem}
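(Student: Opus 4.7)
The upper bound in~\eqref{eq_theorem2} follows immediately from the $\cav[u]$-theorem~\eqref{eq_cavu}: almost-fairness gives $u\equiv 0$, hence $\cav[u]\equiv 0$, so $V_N(p)=O(\sqrt N)$. For the lower bound I plan to invoke Gensbittel's reduction of the repeated game to a martingale optimization problem,
\[
V_N(p)\;=\;\sup_{(p_n)}\;\expectation\sum_{n=1}^{N}\phi\bigl(p_n,\,\mathrm{law}(p_{n+1}\mid p_n)\bigr),
\]
the supremum being over $\Delta(\K)$-valued martingales $(p_n)_{n=1}^{N+1}$ with $p_1=p$, where $\phi\ge 0$ is a continuous per-stage functional built from the stage matrices that vanishes on trivial splittings in the almost-fair case. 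It then suffices to exhibit a single martingale whose expected total $\phi$-gain is $\gtrsim\sqrt N$.

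The key technical estimate is a linear lower bound on $\phi$ for binary splittings into nearby posteriors inside a suitable subsegment of $[p',p'']$. Setting $w(q,y):=\max_i\langle A^q e_i,y\rangle$, so that $u(q)=\min_y w(q,y)=0$ is attained uniquely at $y^*(q)$, a splitting of a prior $q_0=\tfrac12 q+\tfrac12 q'$ into $\{q,q'\}$ realized by two distinct actions (available because the hypothesis on $y^*$ forces $|\I|\ge 2$) yields against any Player~2 mixed strategy $y$ a per-stage payoff at least $\tfrac12 w(q,y)+\tfrac12 w(q',y)$. I claim there exist a subsegment $[q^-,q^+]\subset[p',p'']$ and constants $c,\varepsilon>0$ such that
\[
\min_{y\in\Delta(\J)}\bigl(\tfrac12 w(q,y)+\tfrac12 w(q',y)\bigr)\;\ge\;c\,\|q-q'\|
\qquad\forall\, q,q'\in[q^-,q^+],\ \|q-q'\|\le\varepsilon.
\]
To establish this I first pass to a subsegment on which $y^*$ has positive oscillation on every subinterval (possible since continuity of $y^*$ on $[p',p'']$ follows from Berge's maximum theorem applied to the unique-optimizer correspondence, and the infinite-value hypothesis then makes $y^*$ genuinely non-constant). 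Uniqueness of $y^*(q)$, the piecewise-linear structure of $w(q,\cdot)$, and compactness of $[q^-,q^+]$ furnish a uniform linear growth estimate $w(q,y)\ge\kappa\,\mathrm{dist}(y,y^*(q))$; combined with a uniform lower bound $\|y^*(q)-y^*(q')\|\gtrsim\|q-q'\|$ (obtained after further restriction so that the motion of $y^*$ along the segment is non-degenerate one-dimensional), this gives the claimed linear lower bound.

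To construct the desired martingale I use that $p\in\Delta^{\mathrm{relint}}(\K)$: for any interior point $q_0$ of $[q^-,q^+]$ one can write $p=\alpha q_0+(1-\alpha)r$ with $\alpha>0$ and $r\in\Delta(\K)$, and the first stage of the martingale jumps to $q_0$ or $r$ with probabilities $\alpha$ and $1-\alpha$. Conditional on landing at $q_0$, the subsequent $N-1$ stages perform a symmetric random walk along the segment direction with step size $\delta=\eta/\sqrt N$ for a small $\eta>0$; for $\eta$ small enough a standard Doob-type estimate shows the walk stays inside $[q^-,q^+]$ on a set of trajectories of probability bounded below uniformly in $N$. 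On this confinement event the per-step bound gives an expected gain $\ge c\delta$ per stage, so the total expected gain is $\gtrsim\alpha\cdot c\delta\cdot(N-1)\gtrsim\sqrt N$, as required.

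The main obstacle will be the uniform linear lower bound on $\phi$. Although uniqueness of $y^*$ already yields pointwise strict positivity of $\tfrac12 w(q,y)+\tfrac12 w(q',y)$ for $q\neq q'$, extracting a rate $c\|q-q'\|$ uniform in $(q,q')$ requires compactness arguments controlling simultaneously the local linear growth of $w(q,\cdot)$ near $y^*(q)$---which may degenerate if the active pieces of the max change along the segment---and the modulus of the map $q\mapsto y^*(q)$. My strategy is to handle both by passing to a further subsegment on which the combinatorial structure of the active pieces at $y^*(q)$ is stable, making $y^*$ piecewise affine with a definite non-zero slope.
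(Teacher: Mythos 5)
Your overall architecture coincides with the paper's: the upper bound from the $\cav[u]$-theorem~\eqref{eq_cavu}, Gensbittel's martingale representation~\eqref{eq_mart_opt}, a lower bound on $\mathcal{V}_1\bigl(\tfrac12\delta_q+\tfrac12\delta_{q'}\bigr)$ that is linear in $\|q-q'\|$ along a subsegment (the paper's ``small profitable revelations''), and a symmetric random walk with step $\asymp 1/\sqrt N$ kept inside the subsegment by a Doob maximal inequality (the paper stops the walk at the boundary rather than conditioning on confinement, which is the cleaner way to make the per-stage bound legitimate). Your device of an initial split $p=\alpha q_0+(1-\alpha)r$ for $p\in\Delta^{\mathrm{relint}}(\K)$ is an acceptable substitute for the paper's use of concavity of $V_N$ together with $V_N(\delta_k)=0$.

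The genuine gap is in the key estimate. You justify passing to a subsegment on which $y^*$ has ``positive oscillation on every subinterval'' by continuity (Berge) plus the hypothesis that $y^*$ takes infinitely many values; this inference is invalid. A continuous non-constant map can be locally constant on a dense open set (Cantor-staircase behaviour), in which case every subinterval of $[p',p'']$ contains an interval of constancy and no reverse-Lipschitz bound $\|y^*(q)-y^*(q')\|\ge c\,\|q-q'\|$ holds on any subsegment. What rules this pathology out is algebraic, not topological: by the Snow--Shapley theorem (Karlin, Thm.~2.4.3), equivalently parametric LP, the unique optimizer $y^*(\alpha)$ is a piecewise rational function of $\alpha$ with finitely many pieces, so the infinite-value hypothesis forces a subinterval on which it is a non-constant rational (hence real-analytic) function, which therefore has a point of nonzero derivative and the desired two-sided Lipschitz behaviour nearby; this is exactly Lemma~\ref{lm_optimal_y_is_unstable} of the paper and is the missing idea in your sketch. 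Relatedly, stabilizing the ``combinatorial structure of the active pieces'' makes $y^*$ rational in $\alpha$ (on the fixed support it solves the linear system $(A^{p(\alpha)}y)_i=0$ for active $i$, plus normalization, so Cramer's rule gives a ratio of polynomials), not piecewise affine as you claim, though rationality suffices. Finally, the uniform growth bound $w(q,y)\ge\kappa\,\mathrm{dist}(y,y^*(q))$ is asserted rather than proved; uniformity in $q$ is the delicate point, and the paper's Lemma~\ref{lm_errors_role} obtains it on a further subinterval where the active index set and the support of $y^*$ are constant, via homogeneity, compactness of the set of normalized directions, and a contradiction with uniqueness. You correctly flag both issues as the main obstacle, but the fixes you propose only go through once this finite/rational structure is in hand.
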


\begin{remark}
The upper bound in Theorem~\ref{th_lower} follows from  the estimate~(\ref{eq_cavu}) of Aumann and Maschler. The lower bound shows that the piecewise property is almost a criterion of boundedness. We expect that the piecewise property is a criterion, but the role of uniqueness assumption in Theorem~\ref{th_lower} remains a question for future research.
\end{remark}

\begin{example}
Zamir's game (Example~\ref{ex_1}) fulfills the assumptions of Theorem~\ref{th_lower}. The optimal strategy $y^*(p)$ of Player~2 in the non-revealing game $p_0A^0+p_1 A_1$ is unique and equals $\left(\frac{1+p_1}{4}, \frac{3-p_1}{4}\right)$. It continuously depends on $p$ and takes infinitely many values on any segment $[p',p'']$ with $p'\ne p''$. Hence Theorem~\ref{th_lower} implies $\sqrt{N}$-growth of $V_N(p)$.
\end{example}

\begin{remark}
Theorem~\ref{th_lower} can not be directly applied
to the continuous market model (Example~\ref{ex_3})
because finiteness of $\I$ and $\J$ is important for the proof. 
\end{remark}

\begin{remark}\label{rm_comparison}
Assumptions of Theorem~\ref{th_lower} are weaker than
the condition for the $\sqrt{N}$-behavior by De~Meyer~(1996A)~\cite{DeM1996_1}, %
the weakest condition from the literature. Under the additional assumption
that certain partial-differential equation has regular solutions, he claims that the limit  of ${V_N}/{\sqrt{N}}$ as $N\to\infty$ exists for an almost-fair game if (1) all the sets $\I,\J$, and $\K$ are finite (2) both players have equal number of actions, i.e., $|\I|=|\J|$ (3) in the non-revealing game Player~1 has unique optimal strategy $x^*$ independent of $p\in\Delta(\K)$, and this strategy is completely mixed. Let us check that Theorem~\ref{th_lower}  applies to any such game except the degenerate case with $V_N(p)=0$ for any $N$ and $p$. 
Kaplansky lemma (see De~Meyer~(1996A)~\cite{DeM1996_1}, Lemma~4.1) %
implies that the optimal strategy $y^*(p)$ of the second player  is unique and completely mixed 
and rationally depends  on elements of $A^p$. Therefore $y^*(p)$ either fulfills the assumptions of Theorem~\ref{th_lower} or does not depend on $p$ at all.
In the last case $V_N(p)=0$ because Player~2 defends zero by playing $y^*$
at any stage of $\Gamma_N(p)$.

Let us modify Zamir's example (Example~\ref{ex_1}) by adding 
a fixed convex combination of rows to both matrices 
$A^k$ as a new pure strategy of Player~1. This new game is strategically equivalent to the initial one and has $V_N$ of the order of $\sqrt{N}$ that can be deduced also from Theorem~\ref{th_lower}, but De~Meyer's result becomes inapplicable because $|\I|\ne|\J|$. So Theorem~\ref{th_lower} gives a wider class of $\sqrt{N}$-games, but existence of the limit $\lim_{N\to\infty}{V_N}/{\sqrt{N}}$ is an open problem for this class.
\end{remark}

\section{Recurrent equation, the Kantorovich~metric, and proof of Theorem~\ref{th_upper}}\label{sect_proof1}
There are two different approaches to investigate repeated games with incomplete information: from the perspective of the 
informed player and from the perspective of his non-informed 
opponent. The first one leads to a martingale-optimization problems that arise in selecting the optimal rate of revealing information. 
The second one, the dual approach, is based on the concept of dual game introduced by  De~Meyer and on its recurrent structure. The problem of finding the optimal strategy of Player~2 has a flavor of  multi-criteria optimization: such a strategy does not depend on the state $k$ but has to be efficient for any $k\in \K$. 
This is made precise in an application of the approachability theory studying games with vector payoffs to the behavior of uninformed player in long games with incomplete information.

\enlargethispage{0.3cm}

Our proof of Theorem~\ref{th_upper} is based on the first approach\footnote{I 
am grateful to Eilon~Solan for telling 
me about the paper of Mannor and Perchet~(2013)~\cite{MannorPer}
which allows to use the second approach to prove boundedness of $V_N$. 
They studied fast convergence 
of Blackwell's approachability procedure for repeated games with vector payoffs and found that, if a target set $B$ is a polytope approachable by a finite number of pure actions, then the approaching player has a strategy $\tau$ such that for any strategy $\sigma$ of the opponent and any $N$ the expected Euclidean distance between the average payoff $\vec g_N$ after $N$ rounds and $B$ is bounded from above by $\frac{C}{N}$ with some constant $C$. 
 
Let us apply this result to show boundedness of $V_N$ in almost-fair piecewise games using the standard link between behavior of Player~2 and the approachability  (see Mertens, Sorin, and Zamir (2015)~\cite{Bigbook}, %
Section V.2.c.). It is enough to show that for any $p\in\Delta(\K)$ in $N$-stage game the uninformed player has a strategy $\tau$ such that for any strategy $\sigma$ of Player~1 the expected distance between the normalized
vector payoff $\vec g_N=\left(\frac{1}{N}\sum_{n=1}^N A^l_{i_n,j_n}\right)_{l\in\K}\in\R^\K$ and the target set $B=\{v\in\R^\K\mid v_l\leq 0 \ \forall l\in\K \}$ is bounded by 
 $\frac{C}{N}$. Let us call a game pure-piecewise if piecewise condition is fulfilled in pure actions (i.e.,
 $y^*$ takes values in $\J$, not in $\Delta(\J)$). For pure-piecewise games $B$ is approachable in pure actions, and the result of Mannor and Perchet applies and provides boundedness of $V_N$. 
 
It remains to check that any piecewise game can be reduced to a pure-piecewise. Indeed, define a new set of actions of Player~2 by $\widetilde\J=\J\cup y^*(\Delta(\K))$ (we add $Q$ actions) and the stage game by
$\widetilde A^k_{i,j}=A^k_{i,j}$ for $i\in\I,j\in\J$ and
$\widetilde A^k_{i,y}=\sum_{j\in\J}y_j A^k_{i,j}$ for $y\in y^*(\Delta(\K))$. The new repeated game is pure-piecewise because playing mixed action $y=y^*(p)$ becomes equivalent to selecting pure action $y\in\widetilde\J$. The values of both games coincide since addition of a convex combination of rows as a new row to a matrix game does not change its value.}.
The scheme resembles the one used by De~Meyer and Marino~(2005)~\cite{DeMMar} 
to derive an upper bound for the value of the discrete market model.
We start from the recurrent equation  $V_{N+1}(p)=T[V_N](p)$ in the form from Gensbittel~(2015)~\cite{CavU} 
and construct an explicit non-negative invariant function $h$ of the Shapley operator $T$ using the Kantorovich metric. Monotonicity ideas of De~Meyer and Marino imply that $h$ is an upper bound for $V_N$. This lets us almost avoid strategic analysis of the game but nonetheless get an explicit upper bound.

\subsection{Recurrent equation}\label{subsect_Bellman} Given a strategy $\sigma$ of Player~1 the process of information revealing by his actions is described by a sequence of posterior distributions $p^{(n)}\in\Delta(\K)$ of $k$ at a stage $n$, i.e., $p^{(n)}_{l}$ are defined as the conditional probability of $k=l$ given $h_n$. Posterior distribution of $k$ can be treated as a dynamic state variable of the game from the Player's~1 point of view since $p^{(n)}$ represents the beliefs of Player~2 about $k$ at a current stage. Denote the sequence of random variables $p^{(1)}, p^{(2)},..p^{(N+1)}$ by $p^{(n\geq 1)}$.

The sequence of random variables $\xi^{(1)},\xi^{(2)},..\xi^{(N)}$ is called a martingale of length $N$ adapted to the natural filtration (hereafter, a martingale) if the conditional expectation $\expectation\left[\xi^{(n+1)}\mid \xi^{(1)},\xi^{(2)},..\xi^{(n)}\right]$ equals $\xi^{(n)}$ for all $n=1,..N-1$.

The process of posterior distributions $p^{(n\geq 1)}$ is a martingale of length $N+1$ with values in $\Delta(\K)$ and with non-random $p^{(1)}=p$ (see Mertens, Sorin, and Zamir (2015)~\cite{Bigbook}, %
Section V.2.a.). 

Let $\M_p$ denote the set of all $\Delta(\K)$-valued martingales of infinite length with  $p^{(1)}=p$. Formally, elements of $\M_p$ are pairs consisting of a probability space and a martingale defined on this space.

From the early works of Zamir and Mertens the recurrent equation for the sequence of values is the central tool
to study asymptotic behavior of the value. 
Gensbittel~(2015)~\cite{CavU}
represented the recurrent equation as a martingale 
optimization problem where Player~1 decides how much
information to reveal at the first stage by selecting the distribution of the uninformed player's beliefs at the second stage.
\begin{theorem}[Gensbittel (2015), Proposition~3.5]
\label{th_Gensbittel_Bellman}
For a game $\Gamma_N$ with finite $\I,\J$, and $\K$ the following recurrent relation holds for any $N\geq 0$ (by convention, $V_0\equiv 0$)
\begin{equation}
\label{eq_Bellman}
V_{N+1}(p)=T[V_N](p)=\sup_{p^{(n\geq 1)}\in\M_p}\left[\mathcal{V}_1(\mathcal{P})+\expectation V_{N}(p^{(2)})\right],
\end{equation}
where $\mathcal{P}\in \Delta(\Delta(\K))$ is the distribution of $p^{(2)}$,  and $\mathcal{V}_{N'}(\mathcal{P})$ is the value of the auxiliary ${N'}$-stage game $\G_{N'}(\mathcal{P})$ with partial information on the side of Player~1 defined below.
\end{theorem}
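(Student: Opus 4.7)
The plan is to prove the recursion by standard dynamic programming combined with the Aumann--Maschler splitting lemma, doing induction on $N$. The base case $N=0$ is trivial from $V_0\equiv 0$ (together with the fact that the splitting that reveals nothing forces $p^{(2)}=p$ almost surely, so the right-hand side reduces to $\mathcal{V}_1(\delta_p)=V_1(p)$).

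For the induction step, first write the one-stage Shapley decomposition of $\Gamma_{N+1}(p)$. A pair of first-stage behavioral strategies $(\sigma_1,\tau_1)$, where $\sigma_1(\cdot\mid k)\in\Delta(\I)$ and $\tau_1\in\Delta(\J)$, produces a random pair $(i_1,j_1)$ together with a Bayesian posterior $p^{(2)}=p^{(2)}(i_1,j_1;\sigma_1)$ computed from $\sigma_1$ by Bayes's rule. Conditional on the realized history $(i_1,j_1)$, the continuation is exactly the $N$-stage game with prior $p^{(2)}$; by the principle of optimality its value is $V_N(p^{(2)})$. Hence $V_{N+1}(p)$ equals the value of the one-shot game in which Player~1 picks $\sigma_1$ contingent on $k$, Player~2 picks $\tau_1$, and the payoff is $\expectation[A^{k}_{i_1,j_1}] + \expectation V_N(p^{(2)})$.

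The second step is to reparametrize Player~1's side of this one-shot game via the splitting lemma: every $\sigma_1$ induces a joint law of $(k,p^{(2)})$ under which $(p,p^{(2)})$ is a martingale with $p^{(1)}\equiv p$, and conversely every such law (equivalently, every $\mathcal{P}\in\Delta(\Delta(\K))$ with barycentre $p$) is realizable by some $\sigma_1$. Once the splitting $\mathcal{P}$ is fixed, Player~1's residual first-stage task is, conditional on $p^{(2)}=q$, to choose a mixed action in $\Delta(\I)$; his \emph{effective} information at that point is precisely $q$, not $k$ itself. This is by definition the auxiliary one-stage game $\G_1(\mathcal{P})$ with partial information on Player~1, so the first-stage contribution evaluates to $\mathcal{V}_1(\mathcal{P})$. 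Linearity of expectation yields the summand $\mathcal{V}_1(\mathcal{P})+\expectation V_N(p^{(2)})$, and the sup over $\mathcal{P}$ is the same as a sup over $\M_p$ because only the law of $p^{(2)}$ matters.

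The delicate point is the order of optimization: after Player~1 commits to $\mathcal{P}$, one must justify that Player~2's stage-$1$ best reply can be computed \emph{as if} in the isolated game $\G_1(\mathcal{P})$ rather than in the full $(N{+}1)$-stage game. This is where finiteness of $\I,\J,\K$ and the minmax theorem enter: the continuation value $\expectation V_N(p^{(2)})$ depends on the first-stage actions only through the random posterior $p^{(2)}$, which in turn is determined by the joint law encoded in $\mathcal{P}$ and hence is insensitive to $\tau_1$; therefore Player~2's stage-$1$ optimization decouples from the continuation, and inf/sup commute. I expect this decoupling--plus--minmax step to be the main technical obstacle; the rest is bookkeeping of posteriors and an induction.
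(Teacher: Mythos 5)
Your proposal breaks down at the reparametrization step, and the point you flag as the ``main technical obstacle'' (commuting inf and sup at stage~1) is not where the real work lies. (Note also that the paper does not prove this statement at all: it is imported from Gensbittel~(2015), so the comparison is with the standard proof there.) The claim that ``every $\mathcal{P}\in\Delta(\Delta(\K))$ with barycentre $p$ is realizable by some $\sigma_1$'' is false: the Bayesian posterior $p^{(2)}$ induced by a first-stage strategy is a function of $i_1$ alone (Player~2's first action carries no information on $k$), so it takes at most $|\I|$ values, whereas the supremum in \eqref{eq_Bellman} runs over all of $\M_p$, including splittings with arbitrarily large or continuous support. Moreover, even for realizable splittings, ``splitting plus a free choice of mixed action conditional on $p^{(2)}=q$'' is not a parametrization of $\sigma_1$: the action is itself the signal that generates the posterior, so choosing the stage-1 action as a function of $q$ after fixing $\mathcal{P}$ would in general change the posterior. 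Hence your identification of the two sides of \eqref{eq_Bellman} does not go through in either direction as written.

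What the proof actually requires are two asymmetric arguments, both missing from your text. For $V_{N+1}(p)\leq\sup_{\M_p}[\cdots]$: given $\sigma_1$, let $\mathcal{P}$ be the induced law of $p^{(2)}$ and use the tower property $\expectation\left[A^k_{i_1,j}\right]=\expectation\left[A^{p^{(2)}}_{i_1,j}\right]$ to check that the strategy in $\G_1(\mathcal{P})$ which, given type $q$, plays the conditional law of $i_1$ given $p^{(2)}=q$ earns the same stage payoff against every $y$; therefore the stage-1 amount guaranteed by $\sigma_1$ is at most $\mathcal{V}_1(\mathcal{P})$ --- this is the correct (inequality) version of your assertion that Player~1's ``effective information is $q$, not $k$''. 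For the reverse inequality one must show that every $\mathcal{P}\in\M_p$, realizable or not, satisfies $\mathcal{V}_1(\mathcal{P})+\expectation V_N(p^{(2)})\leq V_{N+1}(p)$: Player~1 privately draws a virtual type $q$ from the conditional law of $p^{(2)}$ given $k$, plays at stage~1 an optimal strategy of $\G_1(\mathcal{P})$ as a function of $q$ only, and thereafter an optimal strategy of $\Gamma_N(q)$; since conditionally on $q$ the state is $q$-distributed and the stage-1 action reveals nothing beyond $q$, this guarantees the required amount even though Player~2's actual posterior is coarser than $q$. This garbling/simulation construction is the heart of the result and is entirely absent from your proposal. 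Finally, your base-case remark is incorrect: $\mathcal{V}_1(\delta_p)$ is $u(p)$, the value of the non-revealing game, not $V_1(p)$; for $N=0$ the identity reads $V_1(p)=\sup_{\mathcal{P}}\mathcal{V}_1(\mathcal{P})$, attained at the fully revealing splitting, and already needs the argument that coarsening Player~1's information cannot raise the value.
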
 
The game $\G_{N'}(\mathcal{P})$ is a version of $\Gamma_{N'}$ where Player~1 is not fully informed of $k$ but receives a noisy signal such that his believes about $k$ are $\mathcal{P}$-distributed. It can be viewed as a usual game with incomplete information with $\Delta(\K)$ as a set of states, $\mathcal{P}$ as a prior distribution, and $A^p=\sum_{k\in\K}p_k A^k$ as a stage payoff function. By the min-max theorem  $\mathcal{V}_1$ has the following representation
\begin{equation}
\label{eq_nu1}
\mathcal{V}_1(\mathcal{P})=\min_{y\in\Delta(\J)}\expectation\left[\max_{i\in\I}\sum_{j\in\J}y_j A^{p^{(2)}}_{i,j}\right],
\end{equation}
 where $p^{(2)}\sim\mathcal{P}$, i.e., is $\mathcal{P}$-distributed.
 
\subsection{Non-existence of small profitable revelations}
The introductory section contains informal reasoning explaining that
the $\sqrt{N}$-behavior of the value is related with the opportunity for Player~1 to reveal his information by means of small beneficial portions. The purpose of this subsection is to formalize the opposite situation.

We say that in an almost-fair game $\Gamma_N$ \emph{Player~1 has no small profitable revelations} if $\Delta(\K)$ can be represented as a finite union of closed convex subsets $\Delta_q\subset\Delta(\K)$, $q=1,..Q'$, such that for any $q$ and any $\mathcal{P}$ supported on $\Delta_q$ the value $\mathcal{V}_1(\mathcal{P})$ is zero.

Informally, this definition says that, if the prior distribution $p$ is inside $\Delta_q$, then Player~1 can not benefit at first stage of $N$-stage game without changing the belief $p^{(2)}$ of Player~2 significantly enough
by pushing $p^{(2)}$ outside of $\Delta_q$ (see~\eqref{eq_Bellman}), i.e.,
without revealing enough information by his action $i_1$.

The following proposition supports the intuition that non-existence of small profitable revelations leads to bounded values.

\begin{proposition}\label{prop_no_revelations}
If in an almost fair-game $\Gamma_N$ with finite $\I,\J$, and $\K$
Player~1 has no small profitable revelations, then for any $N\geq 1$ and any $p\in\Delta(\K)$
$$V_N(p)\leq \|A\|_{\mathrm{lip}}Q'.$$
\end{proposition}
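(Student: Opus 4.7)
The plan is to follow the invariant-function scheme sketched at the opening of Section~\ref{sect_proof1}: exhibit a non-negative function $h:\Delta(\K)\to\R$ with $h\leq \|A\|_\lip Q'$ pointwise and $T[h]\leq h$, where $T$ is the Shapley operator from Theorem~\ref{th_Gensbittel_Bellman}. A monotone induction on $N$ then yields $V_N\leq h$: indeed $V_0\equiv 0\leq h$, and $V_{N+1}=T[V_N]\leq T[h]\leq h$ since $T$ is monotone.

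As a preparatory step, I would upgrade the hypothesis into the existence, for each $q$, of a single uniform defender $y_q\in\Delta(\J)$ with $\max_i\sum_j (y_q)_j A^p_{i,j}\leq 0$ for every $p\in\Delta_q$. This follows by applying Sion's minimax theorem to the functional $(\mathcal{P},y)\mapsto \expectation_{p^{(2)}\sim\mathcal{P}}\max_i\sum_j y_j A^{p^{(2)}}_{i,j}$ on the compact convex sets $\Delta(\Delta_q)$ and $\Delta(\J)$, since it is linear in $\mathcal{P}$ and convex in $y$, and by the hypothesis the sup over $\mathcal{P}$ of the inf over $y$ equals zero. Combining with Lipschitz continuity of $p\mapsto A^p$ then gives the key one-step estimate: for any $\mathcal{P}$ with barycenter $p\in\Delta_q$,
\[
\mathcal{V}_1(\mathcal{P})\ \leq\ \|A\|_\lip\cdot\expectation_{p^{(2)}\sim\mathcal{P}}\bigl[\dist(p^{(2)},\Delta_q)\bigr],
\]
obtained by testing the defining infimum at $y=y_q$. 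The right-hand side vanishes precisely when $\mathcal{P}$ is supported in $\Delta_q$, recovering the hypothesis as a special case.

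The bulk of the argument is the construction of $h$. My proposal is $h(p)=\|A\|_\lip\cdot\psi(p)$, with $\psi:\Delta(\K)\to\{0,1,\ldots,Q'\}$ a piecewise-constant depth function associated to the cover $\{\Delta_q\}$. Informally, $\psi(p)$ counts the number of profitable crossings between strata that a martingale starting at $p$ can still perform before being forced to sit inside a single $\Delta_q$. To verify $T[h]\leq h$, one decomposes any splitting $\mathcal{P}$ at $p$ into a part whose mass remains in the current $\Delta_q$ (contributing zero to $\mathcal{V}_1$ by the hypothesis, while $\expectation\psi(p^{(2)})$ stays $\leq\psi(p)$ thanks to the convexity of $\Delta_q$ and the design of $\psi$) and a part whose mass exits $\Delta_q$ (whose contribution to $\mathcal{V}_1$, bounded by $\|A\|_\lip$ via the estimate above, is absorbed by a strictly positive expected drop $\psi(p)-\expectation\psi(p^{(2)})\geq 1$ produced by the crossing).

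The main obstacle is precisely the design of $\psi$: one must guarantee simultaneously that non-informative martingale steps cannot raise $\expectation\psi$ above $\psi(p)$, and that any informative step produces an expected drop of at least one unit. This forces a careful nesting or ordering of the $Q'$ convex pieces of the cover, and it is the finiteness $Q'$ that bounds $\psi\leq Q'$ and therefore $h\leq \|A\|_\lip\cdot Q'$, completing the proof once the inductive bound is applied.
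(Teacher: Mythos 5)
Your overall skeleton matches the paper's: bound $V_N$ by a nonnegative function $h\le\|A\|_\lip Q'$ that is excessive for the Shapley operator $T$ and conclude by monotonicity from $V_N=T^N[0]$; moreover your one-step estimate $\mathcal{V}_1(\mathcal{P})\le\|A\|_\lip\,\expectation_{p^{(2)}\sim\mathcal{P}}\dist_1\bigl(p^{(2)},\Delta_q\bigr)$ is precisely Lemma~\ref{lm_nu_trigger}, and your derivation of it via a uniform defender $y_q$ (Sion's minimax on $\Delta(\Delta_q)\times\Delta(\J)$, then Lipschitz continuity of $p\mapsto A^p$) is a legitimate variant of the paper's derivation via the Lipschitz property of $\mathcal{V}_1$ in the Kantorovich metric $\dist_2$. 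The genuine gap is the construction of $h$, which you leave open, and the design you sketch cannot be completed. An integer-valued, piecewise-constant ``depth'' $\psi\le Q'$ such that every informative splitting produces an expected drop $\psi(p)-\expectation\psi(p^{(2)})\ge 1$ does not exist: Player~1 may send an arbitrarily small mass $\varepsilon$ outside the current piece and keep the rest inside, and then the expected drop of any function bounded by $Q'$ is at most $Q'\varepsilon<1$, while $\mathcal{V}_1(\mathcal{P})$ can still be positive. Repairing this by asking for a pointwise unit drop on exit, i.e.\ $\psi(p')\le\psi(p)-1$ for all $p'\notin\Delta_q$, is also impossible: the pieces cover $\Delta(\K)$ and overlap, a martingale step can cross between two adjacent pieces in either direction, and such a $\psi$ would have to strictly decrease along a cycle. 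The number of crossings of a martingale between overlapping convex sets is unbounded; what is bounded is the total profit, so the accounting must be proportional to the distance travelled, not to the number of crossings.

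The paper resolves exactly this obstacle by abandoning the integer-valued depth and taking the continuous function $h(p)=\|A\|_\lip\sum_{q=1}^{Q'}\bigl(1-\dist_1(p,\Delta_q)\bigr)$, where $\dist_1$ is the total-variation distance (Lemma~\ref{lm_invariant_h}). For $p\in\Delta_w$ one bounds $\mathcal{V}_1(\mathcal{P})$ by the one-step estimate with $q=w$ only; this gain is compensated exactly by the decrease of the $w$-th summand of $\expectation h(p^{(2)})$, while each of the other $Q'-1$ summands does not increase in expectation because $\dist_1(\cdot,\Delta_q)$ is convex (distance to a convex set) and $\expectation p^{(2)}=p$, so Jensen's inequality applies; since $\dist_1(p,\Delta_w)=0$, this gives $T[h]\le h$ (and with the increasing property, $T[h]=h$). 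As $0\le h\le\|A\|_\lip Q'$, your monotone induction then yields $V_N\le\|A\|_\lip Q'$. In short: your reduction and one-step estimate are sound, but the heart of the proof --- the excessive function --- is missing, and the combinatorial crossing-count route is a dead end; the metric/convexity construction above is what closes it.
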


Theorem~\ref{th_upper} is a corollary of Proposition~\ref{prop_no_revelations} and the next lemma. 
Subsections~\ref{subsect_monotone},~\ref{subsect_Kantorovich}, and~\ref{subsect_invar_func} develop the technique to prove Proposition~\ref{prop_no_revelations} and thereby 
to complete the proof of Theorem~\ref{th_upper}.

\begin{lemma}\label{lm_no_revelations}
If a game $\Gamma_N$ fulfills the assumptions of Theorem~\ref{th_upper}, then Player~1 has no small profitable revelations with $Q'=Q$.
\end{lemma}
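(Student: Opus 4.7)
The plan is to use the finitely many values $y_1,\dots,y_Q$ of the piecewise optimal strategy $y^*$ to cut $\Delta(\K)$ into polytopes on which each $y_q$ remains a witness that the non-revealing game has value zero, and then to verify directly from the formula~(\ref{eq_nu1}) that $\mathcal V_1(\mathcal P)$ vanishes on measures supported in any such piece.

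First I would enumerate the distinct values $y_1,\dots,y_Q$ of $y^*$ and set
$$\Delta_q\;=\;\bigl\{p\in\Delta(\K)\ :\ y_q \text{ is optimal for Player~2 in } \Gamma_1^\nr(p)\bigr\}, \qquad q=1,\dots,Q.$$
Because the game is almost-fair, $y_q$ is optimal in $\Gamma_1^\nr(p)$ if and only if
$\max_{i\in\I}\sum_{j\in\J}(y_q)_j A^p_{i,j}\le 0$, and $A^p_{i,j}=\sum_{k\in\K}p_k A^k_{i,j}$ depends linearly on $p$. So $\Delta_q$ is the intersection of $\Delta(\K)$ with finitely many closed linear half-spaces, i.e.\ a closed convex polytope. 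Moreover $\bigcup_q \Delta_q=\Delta(\K)$, since for every $p$ the value $y^*(p)$ is some $y_q$ and is optimal at $p$.

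Next I would verify the vanishing of $\mathcal V_1$ on each $\Delta_q$. Fix $q$ and a distribution $\mathcal P\in\Delta(\Delta(\K))$ supported in $\Delta_q$. Plugging $y=y_q$ into~(\ref{eq_nu1}) gives
$$\mathcal V_1(\mathcal P)\;\le\;\expectation_{p^{(2)}\sim\mathcal P}\Bigl[\max_{i\in\I}\sum_{j\in\J}(y_q)_j A^{p^{(2)}}_{i,j}\Bigr].$$
For every $p^{(2)}\in\Delta_q$ the bracket is nonpositive by the definition of $\Delta_q$, so $\mathcal V_1(\mathcal P)\le 0$. On the other hand, $\mathcal V_1(\mathcal P)\ge 0$ because Player~1 can play an optimal strategy of $\Gamma_1^\nr(p^{(2)})$ at the unique stage of $\G_1(\mathcal P)$ and secure $u(p^{(2)})=0$ in expectation (almost-fairness). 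Hence $\mathcal V_1(\mathcal P)=0$, which is precisely the no-small-profitable-revelations condition with $Q'=Q$.

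There is no real obstacle here: once one realises that $\Delta_q$ should be defined as the \emph{optimality set} of $y_q$ (rather than a preimage under $y^*$, which need not be convex), the rest is a one-line application of~(\ref{eq_nu1}) combined with the definition of optimality and almost-fairness. The only subtlety worth flagging is to state explicitly that the $\Delta_q$'s need only cover $\Delta(\K)$, not partition it — overlaps are allowed by the definition of no small profitable revelations.
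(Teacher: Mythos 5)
Your proof is correct and follows essentially the same route as the paper: you define $\Delta_q$ as the optimality set of $y^q$, observe it is a polytope cut out by the linear inequalities $\max_{i}\sum_{j}(y^q)_j A^p_{i,j}\le 0$, and conclude $\mathcal V_1(\mathcal P)=0$ on measures supported in $\Delta_q$ via~(\ref{eq_nu1}) and almost-fairness. Your only addition is spelling out the lower bound $\mathcal V_1(\mathcal P)\ge 0$, which the paper leaves implicit.
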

\begin{proof}
Denote by $\{y^q\}_{q=1}^Q\subset\Delta(\J)$ the set of all values that $y^*$ takes. Define $\Delta_q$ as the subset of all $p\in\Delta(\K)$  such that $y^q$ is an optimal strategy of Player~2 in $\Gamma_1^\nr(p)$. Then
$\Delta(\K)=\bigcup_{q=1}^Q\Delta_q$. The subset $\Delta_q$ is a closed convex polytope since this subset is cut from $\Delta(\K)$ by the family of linear inequalities $\sum_k p_k\left(\sum_{j} y_j^q A_{i,j}^k  \right)\leq 0,\quad i\in\I$. It remains to check that for any $q$ and 
any $\mathcal{P}\in\Delta(\Delta_q)$ we have $\mathcal{V}_1(\mathcal{P})=0$. Indeed, playing $y^q$ in $\G_1(\mathcal{P})$ Player~2 defends $0$. This follows from~(\ref{eq_nu1}) because  $\max_{i\in\I}\sum_{j\in\J}y_j^q A_{i,j}^{p^{(2)}}$ is the value of the non-revealing game $\Gamma_1^\nr(p^{(2)}) $ for $p^{(2)}\in\Delta_q$  and equals zero by almost-fairness. 
\end{proof}

\subsection{Monotonicity properties and the role of invariant functions}\label{subsect_monotone}
The Shapley operator $T$ defined by formula~\eqref{eq_Bellman} has the following properties:
\begin{enumerate}
\item \emph{Representation of the value: } $V_N=T^N[0]$ (recall that $V_0\equiv 0$);
\item \emph{Monotonicity:} if $f\geq g$, then $T[f]\geq T[g]$.
\item \emph{Increasing property (for almost-fair games): } $T[f]\geq f$.
\end{enumerate}
Here $f\geq g$ means $f(p)\geq g(p)$ for all $p\in\Delta(\K)$. The first two items immediately follow from~(\ref{eq_Bellman}). 
To prove the third item we take in~(\ref{eq_Bellman}) a constant martingale $p^{(n)}\equiv p$. For such martingale $\mathcal{P}=\delta_p$, where $\delta_p$ is the Dirac $\delta$-measure at $p$. The game $\G_1(\delta_p)$ can be identified with the non-revealing game $\Gamma_1(p)$ (see
Gensbittel~(2015)~\cite{CavU}). 
Hence by almost-fairness $\mathcal{V}_1(\delta_p)=0$ that implies the third item.

Using the following observation, De~Meyer and Marino proved that the discrete market model has bounded values.
\begin{lemma}[De~Meyer, Marino~(2005)~\cite{DeMMar}, Lemma~4.3] 
\label{lm_DeMMar}
If $T$ has the properties (1) and (2) mentioned above, $h\geq 0$, and $T[h]=h$, then   $V_N\leq h$ for any $N\geq 1$. 
\end{lemma}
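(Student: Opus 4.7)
The plan is a straightforward induction on $N$ exploiting the two listed properties of $T$ together with the fixed-point hypothesis $T[h]=h$.

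First I would establish the base case. Since $h\geq 0$ by hypothesis and $V_0\equiv 0$, we trivially have $V_0\leq h$. (If one prefers to start the induction at $N=1$, then from $0\leq h$ monotonicity gives $V_1=T[0]\leq T[h]=h$.)

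Next comes the inductive step. Assume $V_N\leq h$ for some $N\geq 0$. Applying the monotonicity property (2) to this inequality yields $T[V_N]\leq T[h]$. By the representation property (1), the left-hand side is exactly $V_{N+1}$, and by the fixed-point assumption the right-hand side equals $h$. Hence $V_{N+1}\leq h$, completing the induction.

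There is no real obstacle here: the argument uses only the two abstract properties of the Shapley operator and the invariance $T[h]=h$, with the non-negativity of $h$ serving solely to initialize the induction at $V_0=0$. The content of the lemma is precisely that any non-negative invariant function of a monotone operator dominates every iterate of the operator applied to $0$; this is the standard tool that will allow the subsequent construction (via the Kantorovich metric) of an explicit invariant $h$ bounding $V_N$ uniformly in $N$.
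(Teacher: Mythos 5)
Your induction is correct and is exactly the paper's argument, which compresses the same reasoning into the one line $V_N=T^N[0]\leq T^N[h]=h$ using $0\leq h$, monotonicity applied $N$ times, and the fixed-point property. No gap; the two are the same proof written at different levels of detail.
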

Indeed, $V_N=T^N[0]\leq T^N[h]=h$ as $0\leq h$.

In order to prove Proposition~\ref{prop_no_revelations} we need to construct such $h$ for any almost-fair game where Player~1 has no small profitable revelations. We use the Kantorovich metric for that.

\subsection{The Kantorovich metric}\label{subsect_Kantorovich}
Let $(X_0,\dist_0)$ be a compact metric space. The metric $\dist_0$ induces the Kantorovich\footnote{Sometimes this metric is called the Wasserstein distance (named after Leonid Vaserstein). Note that Kantorovich introduced this metric to study optimal transportation problems 27 years before Vaserstein used it in dynamical system context (see the discussion by Vershik~(2013)~\cite{VershikKantorovich}.}  metric $\dist_1$ on $X_1=\Delta(X)$ by 
$$\dist_1(p',p'')=\inf_{x'\sim p',\ x''\sim p''} \expectation \left[\dist_0(x',x'')\right],$$
where infimum is taken over all joint distributions of $x'$ and $x''$ with marginals $p'$ and $p''$, respectively. This makes $X_1$ a compact metric space, and the definition can be iterated to define
$X_2=\Delta(X_1)$ and, more generally, $X_n$ for each 
 $n$.

The dual way to define $\dist_1$ is by the Kantorovich-Rubinstein formula
$$\dist_1(p',p'')=\sup_{|f(x')-f(x'')|\leq d_0(x',x'')} \int_{X_0}f(x)\,(dp'(x)-dp''(x)),$$
where supremum is over all real-valued functions $f$ that are $1$-Lipshitz with respect to $\dist_0$ (see~Villani~(2008)~\cite{Villani}, 5.16)
Hence for a Lipshitz function $g$ on $X_0$ the integral $\int_{X_0}g(x)\,dp(x)$ is Lipschitz with respect to $\dist_1$ as a function of $p$ with the same constant. 

Let us come back to games. Denote by $\dist_0$ the discrete metric on $\K$ defined by
$$\dist_0(k',k'')=\left\{\begin{array}{cc}1, & k'\ne k''\\ 0, & k'= k''\end{array}\right. .$$
It induces the Kantorovich metric $\dist_1$ that in this case coincides with the total-variation distance $\dist_1(p',p'')=\max_{B\subset\K}|p'(B)-p''(B)|$. In turn, $\dist_1$ induces the Kantorovich metric $\dist_2$ on $\Delta(\Delta(\K))$.
\begin{lemma}
\label{lm_nu_is_lipschitz}
For any game $\Gamma_N$ with finite $\I$, $\J$, and $\K$ 
\begin{equation}
\label{eq_Lipschitz_property}
|\mathcal{V}_1(\mathcal{P}')-\mathcal{V}_1(\mathcal{P}'')|\leq \|A\|_\lip \dist_2(\mathcal{P}',\mathcal{P}''),
\end{equation}
where $\|A\|_\lip$ is from Theorem~\ref{th_upper}.
\end{lemma}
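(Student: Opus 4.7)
The plan is to unfold the definition of $\mathcal{V}_1$ given in~\eqref{eq_nu1} and use the Kantorovich-Rubinstein duality twice: first to show that the integrand is Lipschitz in $p\in\Delta(\K)$ with respect to $\dist_1$, then to pass this Lipschitz property to the expectation as a function of $\mathcal{P}\in\Delta(\Delta(\K))$ with respect to $\dist_2$.

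First, for every fixed $y\in\Delta(\J)$ consider
$$f_y(p)=\max_{i\in\I}\sum_{j\in\J} y_j A^p_{i,j}=\max_{i\in\I}\sum_{j\in\J}y_j\sum_{k\in\K}p_k A^k_{i,j}.$$
For each pair $(i,j)$ the coordinate function $k\mapsto A^k_{i,j}$ has oscillation bounded by $\|A\|_\lip$, hence is $\|A\|_\lip$-Lipschitz on $\K$ with respect to the discrete metric $\dist_0$. The Kantorovich-Rubinstein formula then gives
$$\left|\sum_{k\in\K}(p'_k-p''_k) A^k_{i,j}\right|\leq \|A\|_\lip\,\dist_1(p',p'').$$
Since convex combinations and pointwise maxima preserve the Lipschitz constant, $f_y$ is $\|A\|_\lip$-Lipschitz on $(\Delta(\K),\dist_1)$ uniformly in $y$.

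Next, applying Kantorovich-Rubinstein on the second level, for any bounded $\dist_1$-Lipschitz function $g$ on $\Delta(\K)$,
$$\left|\int g\,d\mathcal{P}'-\int g\,d\mathcal{P}''\right|\leq \|g\|_\lip\,\dist_2(\mathcal{P}',\mathcal{P}'').$$
Taking $g=f_y$ yields, for every $y\in\Delta(\J)$,
$$\bigl|\expectation_{\mathcal{P}'}f_y(p^{(2)})-\expectation_{\mathcal{P}''}f_y(p^{(2)})\bigr|\leq \|A\|_\lip\,\dist_2(\mathcal{P}',\mathcal{P}'').$$
Finally, since the infimum (here a minimum, by compactness of $\Delta(\J)$ and continuity of $y\mapsto\expectation f_y$) of a family of functions uniformly Lipschitz in the parameter is itself Lipschitz with the same constant, the desired inequality~\eqref{eq_Lipschitz_property} follows from~\eqref{eq_nu1}.

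No real obstacle is anticipated: the argument is a direct two-step transfer of the Lipschitz constant via the Kantorovich-Rubinstein dual representation of $\dist_1$ and $\dist_2$. The only point to state carefully is the uniformity in $y$ that allows commuting Lipschitz estimates with the outer min over $\Delta(\J)$.
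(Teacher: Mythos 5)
Your argument is correct and is essentially the paper's own proof, only written out in more detail: the paper likewise passes from the $\|A\|_\lip$-Lipschitz dependence of $A^k_{i,j}$ on $k$ (discrete metric) to the $\|A\|_\lip$-Lipschitz dependence of $A^p_{i,j}$ on $p$ via the Kantorovich duality, notes that the convex combination in $y$ and the maximum over $i$ preserve the constant, and then concludes from~\eqref{eq_nu1} by the same second-level Kantorovich estimate and the fact that the minimum over $y$ of a uniformly Lipschitz family is Lipschitz with the same constant. No gap; your explicit remark about uniformity in $y$ is exactly the point the paper leaves implicit.
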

This result with a constant $2\max_{i,j,k}|A_{i,j}^k|$ instead of $\|A\|_\lip$ is proved by Gensbittel~(2015)~\cite{CavU}, %
Proposition~2.1. For Zamir's game (Example~\ref{ex_1}) we have $2\max_{i,j,k}|A_{i,j}^k|=6$ and $\|A\|_\lip=1$.
\begin{proof}
The one-stage payoff $A_{i,j}^k$ is Lipschitz in $k$ with respect to $\dist_0$
with a constant $\|A\|_\lip$. Hence $A_{i,j}^p=\int_\K A_{i,j}^k\, dp(k)$ is $\|A\|_\lip$-Lipschitz in $p$
with respect to $\dist_1$. Also for any $y\in\Delta(\J)$ we get $\max_{i \in\I}\sum_{j\in\J} y_j A_{i,j}^p$ is $\|A\|_\lip$-Lipschitz in $p$. Together with~(\ref{eq_nu1}) this implies~(\ref{eq_Lipschitz_property}).
\end{proof}
\begin{remark}
The  above reasoning extends to the case of uncountable compact metric space $(\K,\dist)$ and $\|A\|_\lip$-Lipschitz payoffs by putting $\dist$ instead of $\dist_0$ in the proof above. Estimate~\eqref{eq_Lipschitz_property} can be also generalized to the case of $\mathcal{V}_N$ with arbitrary $N$. Indeed, $\sum_{n=1}^N A^p_{i_n,j_n}$ is an $N\|A\|_\mathrm{lip}$-Lipschitz function of $p$ for any fixed sequence of actions, and taking expectation or minimum/maximum with respect to side variables does not change the Lipschitz constant.
\end{remark}

\subsection{Construction of the invariant function and the end of proof}\label{subsect_invar_func}
The construction of the invariant function $h$ of $T$ is based on the next lemma. 
\begin{lemma}
\label{lm_nu_trigger}
Let  $\Gamma_N$ be an almost-fair game with finite $\I,\J,\K$ and without small profitable revelations
(i.e., it satisfies the assumptions of Proposition~\ref{prop_no_revelations}).
Then for all $q=1,..Q'$ and $\mathcal{P}\in\Delta(\Delta(\K))$
\begin{equation}
\mathcal{V}_1(\mathcal{P})\leq \|A\|_\lip\expectation_{p\sim\mathcal{P}}\, \dist_1(p,\Delta_q),
\end{equation}
where the distance from a point to a set is defined in the usual way
as $\dist_1(p,\Delta_q)=\inf_{p_q \in \Delta_q} \dist_1(p,p_q)$.
\end{lemma}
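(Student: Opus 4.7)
The plan is to bound $\mathcal{V}_1(\mathcal{P})$ by playing a single uniform strategy in~\eqref{eq_nu1} and then controlling the resulting pointwise payoff via Lipschitz continuity. The first step is to extract from the ``no small profitable revelations'' hypothesis a fixed $y^q\in\Delta(\J)$ that is optimal at \emph{every} prior in $\Delta_q$ simultaneously. Define $Y(p):=\{y\in\Delta(\J):\max_i\sum_j y_j A_{i,j}^p\leq 0\}$; by almost-fairness each $Y(p)$ is non-empty. For any finite collection $\{p_1,\ldots,p_L\}\subset\Delta_q$, apply the hypothesis to the atomic distribution $\mathcal{P}=\tfrac{1}{L}\sum_l\delta_{p_l}$: it yields some $y$ achieving the zero minimum in~\eqref{eq_nu1}. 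Almost-fairness gives $\max_i\sum_j y_jA_{i,j}^{p_l}\geq u(p_l)=0$ for each $l$, so the vanishing average forces every summand to be zero, i.e.\ this $y$ lies in $\bigcap_{l=1}^L Y(p_l)$. Thus $\{Y(p)\}_{p\in\Delta_q}$ is a family of closed subsets of the compact simplex $\Delta(\J)$ with the finite intersection property, and by compactness $Y^q:=\bigcap_{p\in\Delta_q}Y(p)\neq\emptyset$. I fix any $y^q\in Y^q$.

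With this $y^q$ in hand, the inequality follows pointwise. Plugging $y=y^q$ into~\eqref{eq_nu1} gives $\mathcal{V}_1(\mathcal{P})\leq\expectation_{p\sim\mathcal{P}}\max_i\sum_j y_j^q A_{i,j}^p$. The argument inside the proof of Lemma~\ref{lm_nu_is_lipschitz} shows that $p\mapsto\max_i\sum_j y_j^q A_{i,j}^p$ is $\|A\|_\lip$-Lipschitz with respect to $\dist_1$. Hence for every $p\in\Delta(\K)$ and every $p'\in\Delta_q$,
\[
\max_i\sum_j y_j^q A_{i,j}^p\leq \max_i\sum_j y_j^q A_{i,j}^{p'}+\|A\|_\lip\,\dist_1(p,p')\leq \|A\|_\lip\,\dist_1(p,p'),
\]
where the last bound uses $y^q\in Y(p')$. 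Taking the infimum over $p'\in\Delta_q$ yields $\max_i\sum_j y_j^q A_{i,j}^p\leq \|A\|_\lip\,\dist_1(p,\Delta_q)$, and integrating against $\mathcal{P}$ gives the claim.

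The only non-routine ingredient is the existence of the uniform defender $y^q$ in the first paragraph, which requires the Helly-type compactness argument above; the rest is a direct application of the Lipschitz machinery from Subsection~\ref{subsect_Kantorovich}. Note that in the piecewise setting of Theorem~\ref{th_upper} the uniform defender is simply the value of the piecewise-constant $y^*$ on $\Delta_q$, so this first step is automatic; the compactness step is needed only to make the lemma apply at the level of generality of Proposition~\ref{prop_no_revelations}.
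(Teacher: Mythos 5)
Your proof is correct, but it follows a genuinely different route from the paper. The paper proves the lemma by working one level up: it uses the Lipschitz bound of Lemma~\ref{lm_nu_is_lipschitz} for $\mathcal{V}_1$ with respect to the second-level Kantorovich metric $\dist_2$ on $\Delta(\Delta(\K))$, takes $\mathcal{P}'$ to be the push-forward of $\mathcal{P}$ under a (continuous selection of the) nearest-point projection onto $\Delta_q$, and concludes from $\mathcal{V}_1(\mathcal{P}')=0$ together with $\dist_2(\mathcal{P},\mathcal{P}')=\expectation_{p\sim\mathcal{P}}\dist_1(p,\Delta_q)$. You instead stay at the first level: you first extract from the hypothesis a single ``uniform defender'' $y^q$ with $\max_i\sum_j y^q_j A^p_{i,j}\le 0$ for every $p\in\Delta_q$ — your finite-intersection argument is sound, since for an atomic $\mathcal{P}$ supported on $\Delta_q$ the minimizing $y$ in~\eqref{eq_nu1} has average payoff zero while almost-fairness forces each term $\max_i\sum_j y_j A^{p_l}_{i,j}\ge u(p_l)=0$, so all terms vanish and the sets $Y(p)$ have the finite intersection property in the compact simplex $\Delta(\J)$ — and then you only need the $\|A\|_\lip$-Lipschitzness of $p\mapsto\max_i\sum_j y^q_j A^p_{i,j}$ in $\dist_1$. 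What your route buys: it avoids the metric $\dist_2$ and the continuous-selection step of the projection (which the paper asserts without proof), and it exposes a structural consequence of the no-small-revelations hypothesis, namely the existence of a strategy of Player~2 defending zero uniformly on $\Delta_q$ (a converse of Lemma~\ref{lm_no_revelations}; as you note, in the piecewise setting this defender is just the constant value of $y^*$ on $\Delta_q$). What the paper's route buys: given Lemma~\ref{lm_nu_is_lipschitz} it is a two-line argument, it needs no attainment of the minimizer in~\eqref{eq_nu1}, and it does not require closedness of the sets $Y(p)$, only of $\Delta_q$.
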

\begin{proof}
Since for any $\mathcal{P}'$ supported on $\Delta_q$ we have $\mathcal{V}_1(\mathcal{P}')=0$, the Lipshitz property~(\ref{eq_Lipschitz_property}) implies
$$\mathcal{V}_1(\mathcal{P})\leq \|A\|_\lip \dist_2(\mathcal{P},\mathcal{P}').$$
Let $R:\ \Delta(\K)\to\Delta(\K)$ be a continuous selection of $\argmin_{p'\in \Delta_q}\left[\dist_1(p,p')\right]$.
Picking $\mathcal{P}'$ equal to the push-forward of $\mathcal{P}$ by $R$ leads to
$\dist_2(\mathcal{P},\mathcal{P}')=\expectation_{p\sim\mathcal{P}} \dist_1(p,\Delta_q)$ and completes the proof. The intuition behind such a choice of $\mathcal{P}'$ is that we want to transport the ``portion of $\mathcal{P}$'' at each $p$ to the closest $p'\in\Delta_q$, and one can show that this choice is the optimal one.
\end{proof}
\begin{lemma}
\label{lm_invariant_h}
Under the assumptions of Proposition~\ref{prop_no_revelations}
$$h(p)=\|A\|_\lip\sum_{q=1}^{Q'}(1-\dist_1(p,\Delta_q))$$
defines a non-negative invariant function of $T$.
\end{lemma}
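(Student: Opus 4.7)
Non-negativity of $h$ is immediate: with $\dist_0$ the discrete metric on $\K$, the induced Kantorovich metric $\dist_1$ coincides with the total variation distance and takes values in $[0,1]$, so each summand $1-\dist_1(p,\Delta_q)$ lies in $[0,1]$ and $h\ge 0$. In fact, since $\{\Delta_q\}_{q=1}^{Q'}$ covers $\Delta(\K)$, for every $p$ at least one summand equals $1$, giving $h(p)\ge\|A\|_\lip$.

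For invariance I would establish $T[h]\ge h$ and $T[h]\le h$ separately. The former is free: item~(3) of Subsection~\ref{subsect_monotone} already gives $T[f]\ge f$ for every $f$ in an almost-fair game, via the constant martingale $p^{(n)}\equiv p$ in~\eqref{eq_Bellman} together with $\mathcal{V}_1(\delta_p)=0$. For $T[h]\le h$, by~\eqref{eq_Bellman} it suffices to show, for every $p\in\Delta(\K)$ and every $\mathcal{P}\in\Delta(\Delta(\K))$ with barycenter $p$, the inequality
\[
\mathcal{V}_1(\mathcal{P})\le\|A\|_\lip\sum_{q=1}^{Q'}\Bigl[\expectation_{p^{(2)}\sim\mathcal{P}}\dist_1(p^{(2)},\Delta_q)-\dist_1(p,\Delta_q)\Bigr],
\]
which is just a rearrangement of $\mathcal{V}_1(\mathcal{P})+\expectation h(p^{(2)})\le h(p)$.

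The argument for this inequality would proceed in two steps. First, each $\Delta_q$ is convex, so $p\mapsto\dist_1(p,\Delta_q)$ is a convex function on $\Delta(\K)$ (using subadditivity of $\dist_1$ together with convexity of $\Delta_q$). Jensen's inequality applied to the martingale relation $\expectation p^{(2)}=p$ then forces every bracket on the right-hand side to be non-negative. Second, one picks an index $q_0$ with $p\in\Delta_{q_0}$, which exists because the $\Delta_q$'s cover $\Delta(\K)$, and which gives $\dist_1(p,\Delta_{q_0})=0$. Lemma~\ref{lm_nu_trigger} applied with $q=q_0$ then yields
\[
\mathcal{V}_1(\mathcal{P})\le\|A\|_\lip\,\expectation_{p^{(2)}\sim\mathcal{P}}\dist_1(p^{(2)},\Delta_{q_0}),
\]
which is exactly the $q_0$-th bracket; dropping the remaining non-negative brackets produces the required bound.

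The only step that is not formal bookkeeping is the convexity of $\dist_1(\cdot,\Delta_q)$; this is routine but is the sole new analytic ingredient and deserves to be written out explicitly. Conceptually, the construction of $h$ is tight in a pleasant way: Lemma~\ref{lm_nu_trigger} absorbs precisely the single bracket picked out by $p\in\Delta_{q_0}$, Jensen handles the other $Q'-1$ brackets, and the $Q'$-term sum defining $h$ is calibrated so that $h$ carries exactly enough slack to survive one application of $T$.
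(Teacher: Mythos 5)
Your proposal is correct and follows essentially the same route as the paper: it bounds $\mathcal{V}_1(\mathcal{P})$ via Lemma~\ref{lm_nu_trigger} applied to the piece $\Delta_{q_0}$ containing $p$, handles the remaining terms by convexity of $\dist_1(\cdot,\Delta_q)$ plus Jensen with $\expectation p^{(2)}=p$, and gets $T[h]\ge h$ from the increasing property of $T$ in almost-fair games. The bracket rearrangement is only a cosmetic repackaging of the paper's cancellation argument, so no substantive difference or gap.
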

\begin{proof}
The total-variation metric $\dist_1$ is bounded by $1$, and, therefore, $h(p)$ is non-negative. Consider $T[h](p)$ for some $p$ from $\Delta_w$, $w=1,..Q'$. Fix a martingale $p^{(n\geq 1)}\in\M_p$. From Lemma~\ref{lm_nu_trigger} and the definition of $h$ we get
$$\mathcal{V}_1(\mathcal{P})+\expectation h(p^{(2)})\leq \|A\|_\mathrm{lip}\expectation\left[ \dist_1(p^{(2)},\Delta_w)+\sum_{q=1}^{Q'}(1-\dist_1(p^{(2)},\Delta_q))\right]=$$
\begin{equation}
\label{eq_estimates_with_h}
=\|A\|_\mathrm{lip} \expectation\left[1+\sum_{q\ne w}(1-\dist_1(p^{(2)},\Delta_q))\right].
\end{equation}
Note that the total-variation distance to a convex set is convex. Since $\expectation p^{(2)}=p$, Jensen's inequality applied to~(\ref{eq_estimates_with_h}) implies
$$\mathcal{V}_1(\mathcal{P})+\expectation h(p^{(2)})\leq\|A\|_\lip \left(1+ \sum_{q\ne w}^{Q'}(1-\dist_1(p,\Delta_q))\right).$$
Since $\dist_1(p,\Delta_w)=0$ we can return $1=1-\dist_1(p,\Delta_w)$ into the sum and get
$$\mathcal{V}_1(\mathcal{P})+\expectation h(p^{(2)})\leq \|A\|_\lip \sum_{q=1}^{Q'}(1-\dist_1(p,\Delta_q))=h(p).$$
Taking maximum over $p^{(n\geq 1)}$ we obtain $T[h](p)\leq h(p)$, and so $T[h]\leq h$. But the increasing property of $T$ says that $T[h]\geq h$. Thus $T[h]=h$. 
\end{proof}
Now Proposition~\ref{prop_no_revelations} becomes a combination of Lemmas~\ref{lm_DeMMar} and~\ref{lm_invariant_h}. This also completes the proof of Theorem~\ref{th_upper}.

\begin{remark}
\label{rm_continuous_K}
Let us check that the statement of Theorem~\ref{th_upper} remains 
valid for games with a compact metric space $\K$, finite $\I,\J$, and continuous $A$. 
Along the lines of the proof of  inequality~\eqref{eq_theorem1}
we show existence of the value $V_N$ for such a game.

 Fix $\delta>0$. By uniform continuity of $A$ we find $\varepsilon$-net $\K_\varepsilon$ of $\K$ such 
that for any $k\in\K$ and any $i,j$ we have 
\begin{equation}
\label{eq_approx}
\left|A^k_{i,j}-A^{k_\varepsilon}_{i,j}\right|\leq \delta,
\end{equation}
where $k_\varepsilon\in\K_\varepsilon$ denotes a point nearest to $k$. Enumerate points in $\K_\varepsilon$ and make $k_\varepsilon$ single-valued by selecting the point with 
the smaller number in case of a tie. 
Let $p^\varepsilon\in\Delta(\K_\varepsilon)$ be the distribution of $k_\varepsilon(k)$ if $k$ is distributed according to $p$.
Inequality~\eqref{eq_approx} implies that for any strategy $\sigma$ of Player~1 the guaranteed payoffs in $\Gamma_N(p)$ and in $\Gamma_N(p^\varepsilon)$ differ at most by $N\delta$.  The same holds for any strategy $\tau$ of Player~2. 
The game $\Gamma_N(p^\varepsilon)$ can be considered as a game  with finite set of states $\K_\varepsilon$,  hence has a value $V_N(p^\varepsilon)$ that does not exceed $Q\|A\|_\mathrm{lip}$ by Theorem~\ref{th_upper}. For
lower and upper values of $\Gamma_N(p)$ we have
$$V_N(p^\varepsilon)-N\delta\leq \underline{V}_N(p)\leq \overline{V}_N(p)\leq {V}_N(p^\varepsilon)+N\delta\leq  Q\|A\|_\mathrm{lip}+N\delta.$$
Since $\delta$ is arbitrary, the value exists and the inequality~\eqref{eq_theorem1} holds.
\end{remark}

\section{Small profitable revelations, simple random walks, parametric families of matrix games, and proof of Theorem~\ref{th_lower}}\label{sect_proof2}

The main tool in the proof is a representation for $V_N$ as the value of a martingale-optimization problem from~Proposition~3.5 of Gensbittel~(2015)~\cite{CavU}  
(definitions of $\M_p$ and $\mathcal{V}_1$ are introduced in Subsection~\ref{subsect_Bellman} where we already formulated another version of his result):
\begin{equation}
\label{eq_mart_opt}
V_N(p)=\max_{p^{(n \geq 1)}\in\M_p}\expectation\left[\sum_{n=1}^N \mathcal{V}_1(\mathcal{P}_n)\right],
\end{equation}
where $\mathcal{P}_n$ is the conditional distribution of $p^{(n+1)}$ given $p^{(1)},p^{(2)},..p^{(n)}$. 
\begin{remark} Starting from an arbitrary martingale
$p^{(n \geq 1)}$ one can explicitly construct a strategy of Player~1
that guarantees $\expectation\left[\sum_{n=1}^N \mathcal{V}_1(\mathcal{P}_n)\right]$ in $\Gamma_N(p)$, see Gensbittel~(2015)~\cite{CavU}. 
This idea together with~(\ref{eq_mart_opt}) comes from De~Meyer~(2010)~\cite{DeM2010}
where it was formulated in the context of market models.

In particular, a strategy that gives to Player~1 an expected gain of the order of $\sqrt{N}$ can be extracted from our proof of Theorem~\ref{th_lower}.
\end{remark}

We begin with formalization of what does it mean that Player~1 can benefit by revealing an arbitrary small amount of information. Intuition mentioned in the introduction for the continuous market model connects this property with $\sqrt{N}$-behavior. So does Proposition~\ref{prop_small_revelations_imply_sqrt} below. It is proved in Subsection~\ref{subsec_random_walk} using formula~\ref{eq_mart_opt} and explicit construction 
of a martingale that provides $\sqrt{N}$-lower bound on
$V_N(p)$. In Subsection~\ref{subsect_parametric} we check that the assumptions of Theorem~\ref{th_lower} imply existence of small profitable revelations.

\subsection{Small profitable revelations imply $\sqrt{N}$-behavior}
In an almost fair game Player~1 is said  to have \emph{ small profitable revelations} if for some $p',p''\in\Delta(\K)$, $p'\ne p''$,
there exists $C_A>0$ such that for any $\alpha_1,\alpha_2\in[0,1]$ 
$$\mathcal{V}_1\left(\frac{1}{2}\delta_{p(\alpha_1)} + \frac{1}{2}\delta_{p(\alpha_2)}\right)\geq C_A |\alpha_1-\alpha_2|,$$
where $p(\alpha)=\alpha p'+ (1-\alpha)p''$, and 
$\delta_p\in\Delta(\Delta(\K))$ is the Dirac $\delta$-measure at $p$.

In other words, Player~1 has small profitable revelations if 
at first stage of $N$-stage game
by an arbitrary small change of the uninformed player's belief (from $p^{(1)}=p\left(\frac{\alpha_1+\alpha_2}{2}\right)$ to $p^{(2)}=p(\alpha_1)$ or to $p^{(2)}=p(\alpha_2)$ equally likely) he can get a profit of the order of this change.

\begin{proposition}\label{prop_small_revelations_imply_sqrt}
If in an almost-fair game $\Gamma_N$ with finite $\I,\J$, and $\K$ Player~1 has small profitable revelations, then $V_N(p)$ is of the order of $\sqrt{N}$ for any $p\in\Delta^\mathrm{relint}(\K)$. 
\end{proposition}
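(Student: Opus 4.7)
The plan is to invoke the martingale-optimization representation~(\ref{eq_mart_opt}) and exhibit a specific martingale $p^{(n\geq 1)}\in\M_p$ whose associated sum $\expectation\bigl[\sum_n \mathcal{V}_1(\mathcal{P}_n)\bigr]$ is at least a constant multiple of $\sqrt{N}$. The guiding intuition is that the small-profitable-revelations hypothesis lets Player~1 ``trade'' an arbitrarily small amount of information on the segment $[p',p'']$ for an expected gain of the same order, so a symmetric random walk with tiny step on this segment should accumulate a gain of order $\sqrt{N}$ after $N$ steps---exactly as in the De~Meyer--Saley continuous market model.

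Concretely, I would build the martingale in two phases. In Phase~1 (stage~1 only) I would use $p\in\Delta^\mathrm{relint}(\K)$ to pick $\gamma\in(0,\min_{k\in\K} p_k]$ so that $q_2:=\bigl(p-\gamma\,p(1/2)\bigr)/(1-\gamma)$ still lies in $\Delta(\K)$, and set $p^{(2)}=p(1/2)$ with probability $\gamma$ and $p^{(2)}=q_2$ with probability $1-\gamma$; this is a valid martingale move from $p$ since $\gamma\,p(1/2)+(1-\gamma)q_2=p$. In Phase~2, conditional on $p^{(2)}=q_2$ freeze the process; conditional on $p^{(2)}=p(1/2)$ run a simple symmetric random walk on $[p',p'']$ with step $\varepsilon=1/\sqrt{N}$---at each stage $p^{(n)}=p(\alpha_n)$ moves to $p(\alpha_n\pm\varepsilon)$ with equal probabilities, and is absorbed once $\alpha_n\pm\varepsilon$ would leave $[0,1]$. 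The resulting process is a $\Delta(\K)$-valued martingale in $\M_p$.

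Next I would bound $\mathcal{V}_1(\mathcal{P}_n)$ stage by stage. On the frozen branch $p^{(n)}=q_2$ and on an absorbed branch $p^{(n)}$ fixed, $\mathcal{P}_n$ is a Dirac mass and almost-fairness forces $\mathcal{V}_1(\mathcal{P}_n)=u(p^{(n)})=0$. On an unabsorbed random-walk stage $\mathcal{P}_n=\tfrac12\delta_{p(\alpha_n-\varepsilon)}+\tfrac12\delta_{p(\alpha_n+\varepsilon)}$, so the small-profitable-revelations hypothesis gives $\mathcal{V}_1(\mathcal{P}_n)\ge 2C_A\varepsilon$; meanwhile $\mathcal{V}_1(\mathcal{P}_n)\ge 0$ always, since in an almost-fair game $\max_i\sum_j y_j A^p_{i,j}\ge u(p)=0$ for every $y$. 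Plugging into~(\ref{eq_mart_opt}) and taking expectations,
$$V_N(p)\;\ge\;\expectation\!\left[\sum_{n=1}^{N}\mathcal{V}_1(\mathcal{P}_n)\right]\;\ge\;2C_A\varepsilon\,\gamma\sum_{n=2}^{N}\P\bigl(\text{walk unabsorbed at stage }n\bigr).$$

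The last ingredient is a classical random-walk estimate. The centered walk $\alpha_n-\tfrac12$ equals $\varepsilon$ times a simple symmetric walk on $\Z$, absorbed upon exiting $(-1/(2\varepsilon),1/(2\varepsilon))=(-\sqrt{N}/2,\sqrt{N}/2)$. By Donsker's invariance principle (or a direct reflection-principle computation), $\P(\text{unabsorbed by stage }N)\to \P\bigl(\sup_{t\in[0,1]}|B_t|<1/2\bigr)>0$, hence there is $c_0>0$ with $\P(\text{unabsorbed at stage }n)\ge c_0$ for all $n\le N$ and all $N\ge N_0$. Combining gives $V_N(p)\ge 2C_A\gamma c_0(N-1)/\sqrt{N}=\Omega(\sqrt{N})$; adjusting the constant covers $N<N_0$, and the matching upper bound $V_N\le C_2\sqrt{N}$ comes from~(\ref{eq_cavu}). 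I expect the main subtlety to be the scaling $\varepsilon\sim 1/\sqrt{N}$---larger $\varepsilon$ forces rapid absorption and kills the $\sqrt{N}$ growth, smaller $\varepsilon$ yields too little gain per stage---together with mild bookkeeping at the endpoints of $[p',p'']$ (possibly adjusting $\varepsilon$ by a bounded factor so that $\alpha=0,1$ are exactly attainable) to make the SRW survival estimate go through cleanly.
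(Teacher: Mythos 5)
Your proposal is correct and takes essentially the same route as the paper: the martingale representation~\eqref{eq_mart_opt} applied to a simple symmetric random walk with step of order $1/\sqrt{N}$ along $[p',p'']$, a per-step gain of order $C_A/\sqrt{N}$ from the small-profitable-revelations hypothesis, a positive survival-probability estimate, and a reduction of general $p\in\Delta^{\mathrm{relint}}(\K)$ to the midpoint of the segment. The only differences are cosmetic: the paper takes step $\frac{1}{4\sqrt{N}}$ with absorption at $|Z|\geq 2\sqrt{N}-1$ so that Doob's maximal inequality yields survival probability $>\frac12$ uniformly in $N\geq 1$ (avoiding your Donsker/$N\geq N_0$ argument and the small-$N$ patch, which would additionally need the easy observation that $V_N(p)>0$ for the finitely many remaining $N$), and it treats general $p$ via concavity of $V_N$ together with $V_N(\delta_k)=0$ rather than by building the initial split into the martingale as you do.
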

This proposition is proved in the next subsection. 
Together with the next proposition it implies Theorem~\ref{th_lower}.
\begin{proposition}
\label{prop_splitprofit}
Under the assumptions of Theorem~\ref{th_lower} Player~1 has small profitable 
revelations.
\end{proposition}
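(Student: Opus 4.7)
The plan is to combine two ingredients: (a) a sub-segment $[\tilde p',\tilde p'']\subseteq[p',p'']$ on which the unique minimiser $y^*$ of $y\mapsto f(p,y):=\max_{i\in\I}\sum_{j\in\J}y_j A^{p}_{i,j}$ grows at least linearly as $p$ moves along this sub-segment, and (b) a uniform piecewise-linear ``strong convexity'' bound
\[ f(p,y)\ \ge\ c\,\|y-y^*(p)\|_1,\qquad y\in\Delta(\J),\ p\in[\tilde p',\tilde p''].\]
Once these are in hand, the triangle inequality
\[\tfrac{1}{2}\bigl[f(p(\alpha_1),y)+f(p(\alpha_2),y)\bigr]\ \ge\ \tfrac{c}{2}\,\|y^*(p(\alpha_1))-y^*(p(\alpha_2))\|_1\]
holds for every $y\in\Delta(\J)$; minimising in $y$ and invoking~\eqref{eq_nu1} then delivers the required linear-in-$|\alpha_1-\alpha_2|$ lower bound on $\mathcal{V}_1\bigl(\tfrac{1}{2}\delta_{p(\alpha_1)}+\tfrac{1}{2}\delta_{p(\alpha_2)}\bigr)$, after rescaling the parameter so that $\tilde p'=p(0)$ and $\tilde p''=p(1)$.

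For (a) I would exploit the LP structure of the non-revealing game. Uniqueness of $y^*$ together with Berge's maximum theorem force the map $y^*\colon[p',p'']\to\Delta(\J)$ to be continuous. For each choice of ``active rows'' $I^*\subseteq\I$ (those with $(A^{p}y^*)_i=0$) and ``support'' $J^*=\supp(y^*)\subseteq\J$, the optimality system $(A^p y^*)_i=0$ for $i\in I^*$, $y^*_j=0$ for $j\notin J^*$, $\sum_j y^*_j=1$ is linear in $y^*$ with coefficients affine in $p$; when its solution is unique, Cramer's rule writes $y^*$ as a rational function of $p$. Since only finitely many patterns $(I^*,J^*)$ are admissible, a linear parametrisation of $[p',p'']$ decomposes the parameter interval into finitely many relatively open sub-intervals on each of which $y^*(p(\alpha))$ is rational. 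Because $y^*$ takes infinitely many values, on at least one of these sub-intervals $y^*$ is non-constant, so the derivative of $y^*(p(\alpha))$ is non-zero at some interior $\alpha_0$; continuity of this derivative on the open sub-interval produces a neighbourhood $[\alpha_0-\eta,\alpha_0+\eta]$ on which $\|y^*(p(\alpha_1))-y^*(p(\alpha_2))\|_1\ge c_0|\alpha_1-\alpha_2|$ for some $c_0>0$.

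For (b), on this closed neighbourhood the support pattern is stable and $f(p,\cdot)$ is convex piecewise-linear on $\Delta(\J)$ with unique minimiser $y^*(p)$ and minimum value $u(p)=0$. Its smallest feasible directional derivative at $y^*(p)$, call it $c(p)$, is strictly positive by uniqueness, equals the value of an auxiliary LP whose data are affine in $p$, and is therefore continuous; compactness then yields $c^*:=\inf_\alpha c(p(\alpha))>0$, and piecewise linearity globalises the directional bound to the displayed inequality with $c=c^*$. Taking $[\tilde p',\tilde p'']:=[p(\alpha_0-\eta),p(\alpha_0+\eta)]$ and absorbing the affine reparametrisation factor produces the witness $C_A:=c^*c_0\eta>0$ for small profitable revelations on $[\tilde p',\tilde p'']$. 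I expect (b) to be the main technical obstacle: one must rule out any degeneration of the slope constant $c^*$ as $p$ moves across the sub-segment, which is precisely why I shrink to an interior sub-interval of a single support-structure cell, so that $c(p)$ is the value of a single LP with continuously varying data.
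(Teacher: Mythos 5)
Your proposal is correct and follows essentially the same route as the paper: first a sub-interval on which the unique optimal $y^*$ varies at least linearly in the parameter (the paper gets piecewise-rationality from the Snow--Shapley theorem, you from the LP optimality system and Cramer's rule, which is the same idea), then a uniform lower bound $\max_i (A^{p}y)_i \geq c\,\|y-y^*(p)\|$ on a sub-interval where the active-row/support pattern is fixed (the paper's Lemma~\ref{lm_errors_role}, proved by a compactness-plus-uniqueness argument equivalent to your directional-derivative LP), finishing with the identical triangle-inequality step and reparametrisation.
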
 
We prove Proposition~\ref{prop_splitprofit} in Subsection~\ref{subsect_parametric}.

\subsection{Simple random walks and proof of Proposition~\ref{prop_small_revelations_imply_sqrt}}\label{subsec_random_walk}
In order to prove  Proposition~\ref{prop_small_revelations_imply_sqrt}
we define a martingale $p^{(n\geq 1)}$ from $\M_{p\left(\frac{1}{2}\right)}$ that ensures $\sqrt{N}$-lower-bound
in~\eqref{eq_mart_opt} by 
$$p^{(n)}=p\left(\frac{1}{2}+\frac{Z^{(n\wedge\tau_N)}}{4\sqrt N}\right),$$ where $Z^{(n\geq 1)}$ is the simple random walk over $\Z$ starting from $0$ at time $n=1$, stopping time $\tau_N$ is the minimal $n$ such that $|Z^{(n)}|\geq 2\sqrt{N}-1$, and $\wedge$ denotes taking minimum. As above, $p(\alpha)=\alpha p'+(1-\alpha)p''$. 

The following lemma says that $p^{(n\geq 1)}$ makes enough jumps before time $N+1$.
\begin{lemma}
\label{lm_srw}
$\P(\{\tau_N> N\})>\frac{1}{2}$ for any $N\geq 1$.
\end{lemma}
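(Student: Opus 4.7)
The plan is to rewrite $\{\tau_N>N\}$ as the event that the walk stays strictly inside the window of radius $2\sqrt{N}-1$ throughout its first $N$ values, namely $\{\max_{1\leq n\leq N}|Z^{(n)}|<2\sqrt{N}-1\}$, and to bound the probability of its complement by Doob's maximal inequality. The key ingredient is that $(Z^{(n)})^2$ is a non-negative submartingale: since the increments of the simple random walk are centered with unit variance, $(Z^{(n)})^2-(n-1)$ is a martingale and $\expectation[(Z^{(N)})^2]=N-1$. Doob applied at level $(2\sqrt{N}-1)^2$ then gives
$$\P\left(\max_{1\leq n\leq N}|Z^{(n)}|\geq 2\sqrt{N}-1\right)\leq\frac{N-1}{(2\sqrt{N}-1)^2},$$
so $\P(\tau_N>N)\geq 1-\frac{N-1}{(2\sqrt{N}-1)^2}$.

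It remains to check that $\frac{N-1}{(2\sqrt{N}-1)^2}<\frac{1}{2}$ for every $N\geq 1$. Cross-multiplying reduces this to $2N-4\sqrt{N}+3>0$, which I would rewrite as $2(\sqrt{N}-1)^2+1>0$; the last inequality is manifest, and the proof is complete. The only real care needed is in this algebraic rewriting: a crude asymptotic estimate using $(2\sqrt{N}-1)^2\approx 4N$ only gives a limiting bound of roughly $1/4$ and does not by itself settle small $N$. The form $2(\sqrt{N}-1)^2+1$ handles all $N\geq 1$ uniformly (and for $N\in\{1,2,3\}$ the statement is anyway trivial, since the integer-valued walk cannot reach the boundary in so few steps, giving $\P(\tau_N>N)=1$).
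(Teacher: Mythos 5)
Your proof is correct and follows essentially the same route as the paper's: Doob's maximal inequality applied to the walk (with $\expectation[(Z^{(N)})^2]=N-1$ under the paper's indexing) to bound $\P(\max_{1\leq n\leq N}|Z^{(n)}|\geq 2\sqrt{N}-1)$ by $\frac{N-1}{(2\sqrt{N}-1)^2}$, and then the elementary observation that this is strictly less than $\frac{1}{2}$. Your explicit verification of the final algebraic inequality via $2(\sqrt{N}-1)^2+1>0$ is a useful detail the paper leaves implicit.
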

\begin{proof}
Let us estimate $\P(\{\tau_N> N\})= 1-\P\left(\{\max_{n=1,..N}\left|Z^{(n)}\right|\geq 2\sqrt{N}-1\}\right)$ from below. 

For any square-integrable martingale $\xi^{(n\geq 1)}$, $\lambda\ne 0$, and $N\geq 1$ the maximal Doob inequality holds (see Revyz, Yor~(1999)~\cite{Yor}, 
Corollary~(1.6) from Chapter~II): $$\mathbb{P}\left(\left\{\max_{n=1,..N} \left|\xi^{(n)}\right|\geq \lambda\right\}\right)\leq \frac{\expectation\left[\left(\xi^{(N)}\right)^2\right]}{\lambda^2}.$$
The simple random walk $Z_N$ is a martingale, and hence by Doob's inequality 
$$\P\left(\left\{\max_{n=1,..N}\left|Z^{(n)}\right|\geq 2\sqrt{N}-1\right\}\right)\leq \frac{N-1}{(2\sqrt{N}-1)^2}<\frac{1}{2}.$$ Thus $\P(\{\tau_N> N\})>\frac{1}{2}$ that completes the proof of the lemma.
\end{proof}

\begin{proof}[Proof of Proposition~\ref{prop_small_revelations_imply_sqrt}]
First, let us prove the result for $p=p\left(\frac{1}{2}\right)$.
Using the constructed martingale $p^{(n\geq 1)}$ in formula~(\ref{eq_mart_opt}) we get
$$V_N\left(p\left(\frac{1}{2}\right)\right)\geq \expectation\left[\sum_{n=1}^N \mathcal{V}_1(\mathcal{P}_n)\right]\geq \sum_{n=1}^N \expectation\left[\mathcal{V}_1(\mathcal{P}_n)\mid \tau_N>n \right]\P\left(\{\tau_N>n\}\right).$$
If the martingale has a jump at time $n$, i.e., if $\tau_N>n$, then
the conditional distribution $\mathcal{P}_n$ of $p^{(n+1)}$ given $p^{(1)},..p^{(n)}$ with $p^{(n)}=p(\alpha)$ is equal to
$\frac{1}{2}\delta_{p\left(\alpha+\frac{1}{4\sqrt N}\right)} + \frac{1}{2}\delta_{p\left(\alpha-\frac{1}{4\sqrt N}\right)}$. Since Player~1 has small profitable revelations
 $\expectation\left[\mathcal{V}_1(\mathcal{P}_n)\mid \tau_N>n \right]\geq
 \frac{C_A}{2\sqrt{N}}$. Therefore
 $$V_N\left(p\left(\frac{1}{2}\right)\right)\geq 
\frac{C_A}{2\sqrt{N}} \sum_{n=1}^N \P\left(\{\tau_N>n\}\right)\geq \frac{C_A}{2}\sqrt{N}\P\left(\{\tau_N>N\}\right)\geq \frac{C_A}{4}\sqrt{N}.$$
(in the last inequality we applied Lemma~\ref{lm_srw}).  This gives the result for $p=p\left(\frac{1}{2}\right)$. 
 
Now consider an arbitrary prior distribution $p$ from the relative interior $\Delta^{\mathrm{relint}}(\K)$ of the simplex. We need two observations:
\begin{itemize}
\item For almost-fair games $V_N(\delta_k)=0$ for any $k$. Indeed, for such prior both players know that the game is $N$ times repeated matrix game $A^k$ that can be identified with $\Gamma_1^\nr(\delta_k)$ and so has zero value.
\item $V_N$ is a concave function of $p$ (see Mertens, Sorin, Zamir~(2015)~\cite{Bigbook}, Section V.1.,  %
or Gensbittel~(2015)~\cite{CavU}). 
\end{itemize}
Any $p\in\Delta^{\mathrm{relint}}(\K)$  can be represented as a convex combination $p=\beta p\left(\frac{1}{2}\right)+\sum_{k\in\K}\beta_k\delta_k$ with $\beta>0$.
The maximal possible $\beta$ equals $1-\dist_1\left(p,p\left(\frac{1}{2}\right)\right)$, where $\dist_1$ denotes the total-variation distance ($\dist_1$ is defined in Subsection~\ref{subsect_Kantorovich}). Concavity leads to
$$V_N(p)\geq \beta V_N\left(p\left(\frac{1}{2}\right)\right)+\sum_{k\in\K}\beta_k V_N(\delta_k)=\beta V_N\left(p\left(\frac{1}{2}\right)\right)\geq \left(1-\dist_1\left(p,p\left(\frac{1}{2}\right)\right)\right)\frac{C_A }{4}\sqrt{N}.$$
Thus Proposition~\ref{prop_small_revelations_imply_sqrt} is proved. 
\end{proof}

\subsection{Parametric families of matrix games and proof of Proposition~\ref{prop_splitprofit}}\label{subsect_parametric}
The proof of  Proposition~\ref{prop_splitprofit} is based on two lemmas. 
The first one says that we can find a subinterval where the unique optimal strategy $y^*=y^*(\alpha)\in\Delta(\J)$ of Player~2 in the non-revealing game $\Gamma_1^\nr(p(\alpha))$ depends on $\alpha$ strongly enough. 
Let us formulate this result more rigorously.
\begin{lemma}
\label{lm_optimal_y_is_unstable}
Under the assumptions of Theorem~\ref{th_lower} there exists a constant 
$C>0$ and $0\leq\alpha_\mathrm{min}<\alpha_\mathrm{max}\leq 1$ such that
for any $\alpha_1,\alpha_2\in [\alpha_\mathrm{min}, \alpha_\mathrm{max}]$ 
\begin{equation}
\label{eq_optimal_y_is_unstable}
\|y^*(\alpha_1)-y^*(\alpha_2)\|\geq C|\alpha_1-\alpha_2|,
\end{equation}
where $\|\,\cdot\,\|$ denotes the Euclidean norm.
\end{lemma}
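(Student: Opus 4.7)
The plan is to exploit the parametric structure of the non-revealing game along the segment $[p',p'']$. Parametrize it by $\alpha\in[0,1]$ via $p(\alpha)=\alpha p'+(1-\alpha)p''$. Since $A^{p(\alpha)}=\alpha A^{p'}+(1-\alpha)A^{p''}$ is an \emph{affine} family of matrices, the non-revealing game $\Gamma_1^{\nr}(p(\alpha))$ is a parametric matrix game. I will use standard parametric LP theory: for each $\alpha$, the unique optimal strategy $y^*(\alpha)$ is characterized by a support set $J^*(\alpha)\subseteq\J$ together with a ``reference'' support $I^*(\alpha)\subseteq\I$ of Player~1, and on any interval on which the pair of supports is constant, $y^*(\alpha)$ is the unique solution of a linear system with coefficients affine in $\alpha$, hence (by Cramer's rule) a vector of rational functions of $\alpha$.

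First, I would argue that $[0,1]$ decomposes into finitely many closed subintervals on which the pair of supports is constant. There are only finitely many subsets of $\I\times\J$, and the indicator of each candidate support being active is determined by finitely many weak linear inequalities in $\alpha$, so the decomposition is a finite union of closed intervals. Since by assumption $y^*$ takes infinitely many values on $[0,1]$, at least one of these subintervals, call it $[\beta',\beta'']$, must be non-degenerate and $y^*$ must be non-constant on it. By Cramer's rule, on $[\beta',\beta'']$ every coordinate $y^*_j(\alpha)$ is a rational function of $\alpha$; non-constancy of the vector $y^*(\alpha)$ means at least one coordinate $y^*_{j_0}(\alpha)$ is a non-constant rational function.

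Next I would shrink the interval to extract a uniform lower bound on the derivative. A non-constant rational function on the compact interval $[\beta',\beta'']$ has no poles there (since $y^*_{j_0}$ is finite and continuous by uniqueness) and its derivative is again a non-zero rational function, vanishing at most at finitely many points of $[\beta',\beta'']$. Choose $[\alpha_{\min},\alpha_{\max}]\subset[\beta',\beta'']$ disjoint from this finite set; then $|(y^*_{j_0})'(\alpha)|\geq c>0$ for some $c$ on $[\alpha_{\min},\alpha_{\max}]$, and the mean value theorem gives
\[
\|y^*(\alpha_1)-y^*(\alpha_2)\|\geq |y^*_{j_0}(\alpha_1)-y^*_{j_0}(\alpha_2)|\geq c|\alpha_1-\alpha_2|
\]
for all $\alpha_1,\alpha_2\in[\alpha_{\min},\alpha_{\max}]$, which is \eqref{eq_optimal_y_is_unstable} with $C=c$.

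The main obstacle is establishing cleanly that on each constant-support subinterval $y^*(\alpha)$ is a rational function, which requires being careful about what ``support'' means when uniqueness is assumed. The cleanest way is to invoke Kaplansky's lemma (already cited in Remark~\ref{rm_comparison}): under uniqueness of $y^*(\alpha)$, one can choose $I^*(\alpha)$ and $J^*(\alpha)$ of equal cardinality such that the restricted square submatrix $A^{p(\alpha)}_{I^*,J^*}$ together with the probability normalization yields a non-singular linear system whose unique solution is the pair $(y^*(\alpha),v(\alpha))$. On the subinterval where $I^*,J^*$ are fixed, the determinant in the denominator is a non-vanishing polynomial in $\alpha$ (non-vanishing because the system is solvable for every $\alpha$ in the subinterval), so Cramer's rule indeed produces a rational function without poles, justifying the argument above.
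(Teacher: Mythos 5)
Your proposal follows essentially the same route as the paper: exploit affinity of $A^{p(\alpha)}$ in $\alpha$, represent the unique optimal strategy via a square subsystem solved by Cramer's rule to get piecewise rationality with finitely many pieces, locate a non-degenerate piece on which $y^*$ is a non-constant rational function, and bound the derivative of some coordinate away from zero on a closed subinterval, finishing with the mean value theorem. Two caveats. First, the structural fact you lean on (equal-cardinality index sets $I^*,J^*$ giving a non-singular square system whose solution is $(y^*(\alpha),v(\alpha))$) is not Kaplansky's lemma, which concerns completely mixed square games and does not apply under mere uniqueness of $y^*$; the correct tool is the Shapley--Snow theorem (Karlin (1959), Theorem~2.4.3), exactly what the paper invokes, giving $y^{*>0}=\mathrm{adj}(M)e/\langle e,\mathrm{adj}(M)e\rangle$ for some square submatrix $M$ of $A^{p(\alpha)}$. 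Second, your claim that the constant-support regions are cut out by finitely many weak \emph{linear} inequalities in $\alpha$ is imprecise: after substituting the Cramer formulas the defining conditions are polynomial (rational) in $\alpha$; finiteness of the domains of rationality still follows, as in the paper, from the finiteness of the set of square submatrices together with the fact that distinct rational functions agree at only finitely many points. With the citation corrected, your derivative and mean-value step on a subinterval avoiding the finitely many zeros of $(y^*_{j_0})'$ is sound and matches the paper's conclusion that $y^*$ has non-zero slope near some interior point of the chosen interval.
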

\begin{proof}
Consider an arbitrary matrix game $A$. Let $y^*$ be an extreme point
of the set of Player's~2 optimal strategies. Denote by $y^{*>0}$ the vector of its non-zero components. 
The classical result of Snow and Shapley (see Karlin~(1959)~\cite{Karlin}, %
Theorem~2.4.3) says that there exists a square submatrix $M$ of $A$ such that
$y^{*>0}=\frac{\mathrm{adj}(M)e}{\langle e,\mathrm{adj}(M) e\rangle}$ with   non-zero denominator. Here $e$ denotes the vector of all ones, $\langle\cdot,\cdot\rangle$ is the standard scalar product, and $\mathrm{adj}(M)$ is the adjugate of $M$, i.e., $\mathrm{adj}(M)_{i,j}$
equals the $(j,i)$-cofactor of $M$. In particular $y^*$ is a rational function of entries of $M$.

The result of Snow and Shapley implies that, if a matrix game $A=A(\alpha)$ is an affine function of a parameter $\alpha$ and the optimal strategy $y^*=y^*(\alpha)$ of Player~2 is unique for all $\alpha$, then $y^*$ is a piecewise-rational function of $\alpha$ with finitely many domains of rationality (since the number of square submatrices of $A$ is finite).

In Theorem~\ref{th_lower} it is assumed that $y^*(\alpha)$ takes infinitely many different values. Therefore there is an interval $D\subset[0,1]$ such that $y^*(\alpha)$ is a non-constant rational function on $D$. Hence there is a point in the interior of $D$ where $\frac{d}{d\alpha}y^*\ne 0$. In a small annulus $[\alpha_\mathrm{min},\alpha_\mathrm{max}]$ of this point $y^*$ is close to affine function with non-zero slope. This implies~(\ref{eq_optimal_y_is_unstable}).
\end{proof}
The next lemma says that losses of Player~2 from playing non-optimal mixed strategy  in the non-revealing game are proportional to the distance between the strategy used and the optimal one. This result is similar to Lemma~4.3 from De~Meyer~(1996A)~\cite{DeM1996_1}. 
We borrow the idea of the proof from his paper.
\begin{lemma}
\label{lm_errors_role}
Under the conditions of Lemma~\ref{lm_optimal_y_is_unstable} there exist an interval $[\alpha_\mathrm{min}', \alpha_\mathrm{max}']\subset [\alpha_\mathrm{min}, \alpha_\mathrm{max}]$
of non-zero length and a constant 
$C'>0$ such that for any $y\in\Delta(\J)$ and any $\alpha\in[\alpha_\mathrm{min}', \alpha_\mathrm{max}']$
\begin{equation}
\label{eq_errors_cost}
\max_{i\in\I} \sum_{j\in\J}A^{p(\alpha)}_{i,j} y_j\geq C' \|y^*(\alpha)-y\|.
\end{equation}
\end{lemma}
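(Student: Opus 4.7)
The lemma is a uniform quantitative uniqueness statement: Player~2's loss in the non-revealing game $\Gamma_1^\nr(p(\alpha))$ from playing $y$ instead of the unique optimal $y^*(\alpha)$ grows at least linearly in $\|y - y^*(\alpha)\|$, with a constant independent of $\alpha$ over some subinterval. I would prove it in three stages: combinatorial stabilization, a local linear-growth bound, and a global patch.

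\emph{Combinatorial stabilization.} By Snow--Shapley, as already invoked in the proof of Lemma~\ref{lm_optimal_y_is_unstable}, $y^*(\alpha)$ is rational in $\alpha$ on each of finitely many sub-intervals where a single square submatrix $M(\alpha)$ of $A^{p(\alpha)}$ represents it. Fix one such interval. For every $i\in\I$ the scalar $\sum_j A^{p(\alpha)}_{i,j} y^*(\alpha)_j$ is then rational in $\alpha$, hence either identically zero or vanishing at finitely many points. Refining accordingly, on a subinterval $[\alpha'_\min,\alpha'_\max]$ of positive length both the Player~1 best-response set $I^*(\alpha):=\{i\in\I: \sum_j A^{p(\alpha)}_{i,j} y^*(\alpha)_j=0\}$ and the support $J_0:=\supp y^*(\alpha)$ become constant; denote the first by $I^*$.

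\emph{Local linear growth.} Set $\Phi(\alpha,y):=\max_{i\in\I}\sum_j A^{p(\alpha)}_{i,j} y_j$. On $[\alpha'_\min,\alpha'_\max]$ the rows outside $I^*$ satisfy $\sum_j A^{p(\alpha)}_{i,j} y^*(\alpha)_j<0$ with gap that is uniformly bounded away from $0$ by continuity and compactness, so there exists $\delta>0$ such that whenever $\|y-y^*(\alpha)\|\le\delta$ the max in $\Phi$ is attained on $I^*$ and, by almost-fairness, $\Phi(\alpha,y)=\max_{i\in I^*}\langle A^{p(\alpha)}_i,\,y-y^*(\alpha)\rangle$. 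Introduce the fixed cone $T:=\{\tilde y\in\R^{|\J|}: \sum_j \tilde y_j=0,\ \tilde y_j\ge 0\text{ for }j\notin J_0\}$, into which every admissible $y-y^*(\alpha)$ falls. Uniqueness of $y^*(\alpha)$ forces $\max_{i\in I^*}\langle A^{p(\alpha)}_i,\tilde y\rangle>0$ for each nonzero $\tilde y\in T$: otherwise $y^*(\alpha)+\varepsilon\tilde y$ would be a second optimal strategy for small $\varepsilon>0$. Continuity in $\alpha$ together with compactness of the unit sphere of $T$ then yields
$$
C_1':=\min_{\alpha\in[\alpha'_\min,\alpha'_\max]}\ \min_{\tilde y\in T,\,\|\tilde y\|=1}\ \max_{i\in I^*}\langle A^{p(\alpha)}_i,\tilde y\rangle\ >\ 0,
$$
so $\Phi(\alpha,y)\ge C_1'\|y-y^*(\alpha)\|$ on the uniform $\delta$-neighborhood.

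\emph{Global patching, and the main obstacle.} On the compact complement $K=\{(\alpha,y):\alpha\in[\alpha'_\min,\alpha'_\max],\ y\in\Delta(\J),\ \|y-y^*(\alpha)\|\ge\delta\}$ the function $\Phi$ is continuous and strictly positive (again by uniqueness of $y^*(\alpha)$), hence attains a positive minimum $m$; since $\|y-y^*(\alpha)\|$ is bounded above on $\Delta(\J)$ by some $D$, we have $\Phi(\alpha,y)\ge (m/D)\|y-y^*(\alpha)\|$ on $K$. Setting $C':=\min(C_1',m/D)$ gives the inequality on all of $[\alpha'_\min,\alpha'_\max]\times\Delta(\J)$. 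The principal difficulty lies in the local step: the growth constant can \emph{a priori} degenerate as $\alpha$ varies because the tangent cone of $\Delta(\J)$ at $y^*(\alpha)$ and the active-row set $I^*(\alpha)$ both depend on $\alpha$; it is exactly the combinatorial stabilization of the first stage that freezes these objects and reduces everything to the continuous minimum of a strictly positive function over a single compact domain.
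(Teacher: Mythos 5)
Your proof is correct and follows essentially the same route as the paper: stabilize the set of active rows $I^*$ and the support of $y^*$ on a subinterval, reduce by homogeneity to unit directions in the fixed tangent cone, and obtain a uniform positive constant from uniqueness (a violating direction $\tilde y$ would make $y^*(\alpha)+\varepsilon\tilde y$ a second optimal strategy) together with continuity and compactness. The only cosmetic difference is your local/global split, which is unnecessary: since $\max_{i\in\I}\sum_{j\in\J}A^{p(\alpha)}_{i,j}y_j\geq\max_{i\in I^*}\langle A^{p(\alpha)}_i,\,y-y^*(\alpha)\rangle$ for \emph{every} $y\in\Delta(\J)$, the cone estimate already yields the bound globally, which is exactly how the paper argues.
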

\begin{proof}
Let $\I_0(\alpha)$ be the set of all $i\in\I$ such that $\sum_{j\in\J}A^{p(\alpha)}_{i,j} y^*_j(\alpha)=0$. Find a subinterval
$[\alpha_\mathrm{min}', \alpha_\mathrm{max}']$  of $[\alpha_\mathrm{min}, \alpha_\mathrm{max}]$ such that $\I_0(\alpha)$
and the support $\J^*(\alpha)$ of $y^*(\alpha)$ remain the same when $\alpha$ ranges over the subinterval. Such subinterval exists 
since $y^*(\alpha)$ is a continuous function of $\alpha$ (indeed, if $\alpha_m\to\alpha$ as $m\to\infty$ and $y=\lim_{m\to\infty} y^*(\alpha_m)$, then $y$ is an optimal
strategy in $A^{p(\alpha)}$, and hence $y=y^*(\alpha)$ by uniqueness). 
The left-hand side of~\eqref{eq_errors_cost} is bounded from below by
$f(\alpha, \varepsilon)=\max_{i\in\I_0(\alpha)}\sum_{j\in\J}A^{p(\alpha)}_{i,j} \varepsilon_j$, where $\varepsilon=y-y^*(\alpha)$. Define $D(\alpha)=\left\{\varepsilon\in\R^\J\mid \varepsilon=(y-y^*(\alpha))/\|y-y^*(\alpha)\|, \ y\in\Delta(\K)\setminus\{y^*(\alpha)\}\right\}$ and note that $D(\alpha)$ is a compact set and does not depend on $\alpha$ for $\alpha\in [\alpha_\mathrm{min}', \alpha_\mathrm{max}']$.
A function $f$ is a homogeneous function of $\varepsilon$ for any $\alpha$. Therefore it is enough to show that $C=\min_{[\alpha_\mathrm{min}', \alpha_\mathrm{max}']\times D} f(\alpha, \varepsilon)>0$. Since $f$ is continuous, this minimum exists an is attained at some $(\alpha^*,\varepsilon^*)$. If $C\leq 0$, then for
$\lambda>0$ small enough we would have $\max_{i\in\I} \sum_{j\in\J}A^{p(\alpha^*)}_{i,j} y_j\leq 0$ for $y=y^*(\alpha^*)+\lambda\varepsilon^*$
that contradicts the uniqueness of the optimal strategy of Player~2 in 
$\Gamma_1^\nr(p(\alpha^*))$. This contradiction completes the proof.
\end{proof}

Now Proposition~\ref{prop_splitprofit} can be proved easily.
\begin{proof}[Proof of Proposition~\ref{prop_splitprofit}]
Consider the game $\G_1(\mathcal{P})$ with $\mathcal{P}=\frac{1}{2}\delta_{p(\alpha_1)} + \frac{1}{2}\delta_{p(\alpha_2)}$. This game can be interpreted as follows. Chance selects one of two matrix games $A^{p(\alpha_1)}$ or $A^{p(\alpha_2)}$ equally likely. Then the game selected is played, but only Player~1 knows the choice. 
We assume that $\alpha_1,\alpha_2\in[\alpha_\mathrm{min}',\alpha_\mathrm{max}']$, where $\alpha_\mathrm{min}'$ and $\alpha_\mathrm{max}'$ come from Lemma~\ref{lm_errors_role}.
Suppose Player~2 uses a strategy $y\in\Delta(\J)$. Then Lemmas~\ref{lm_optimal_y_is_unstable} and~\ref{lm_errors_role} prevents him from being successful in both $A^{p(\alpha_1)}$ and $A^{p(\alpha_2)}$.
More formally, by playing $y$ he defends (see~(\ref{eq_nu1}))
$$\expectation_{p^{(2)}\sim\mathcal{P}}\left[\max_{i\in\I}\sum_{j\in\J}y_j A^{p^{(2)}}_{i,j}\right]=\frac{1}{2}\max_{i\in\I}\sum_{j\in\J}y_j A^{p(\alpha_1)}_{i,j}+\frac{1}{2}\max_{i\in\I}\sum_{j\in\J}y_j A^{p(\alpha_2)}_{i,j}.$$
Lemma~\ref{lm_errors_role} implies that this amount is bounded from below by 
$$\frac{1}{2}C' \|y^*(\alpha_1)-y\|+\frac{1}{2}C' \|y^*(\alpha_2)-y\|
\geq\frac{1}{2}C'\|y^*(\alpha_1)-y^*(\alpha_2)\|.$$
Here we applied the triangle inequality. By Lemma~\ref{lm_optimal_y_is_unstable} this quantity is greater than  
$C C'|\alpha_1-\alpha_2|/2.$ Since $y$ was arbitrary we get
$$\mathcal{V}_1\left(\frac{1}{2}\delta_{p(\alpha_1)} + \frac{1}{2}\delta_{p(\alpha_2)}\right)\geq\frac{1}{2}C C'|\alpha_1-\alpha_2|.$$
This inequality is equivalent to existence of small profitable revelations
with redefined $p'=p(\alpha_\mathrm{min}')$,  $p''=p(\alpha_\mathrm{max}')$,
and $C_A=\frac{C C'}{2(\alpha_\mathrm{max}'-\alpha_\mathrm{min}')}$
and concludes the proof of Proposition~\ref{prop_splitprofit}. 
\end{proof}

\section{What is next?}\label{sect_open}
The results of this paper raise more questions than
answers:
\begin{itemize}
\item The piecewise property can be easily checked for any particular repeated game because it appeals only to a parametric family of matrix games. But can one describe piecewise games more explicitly? How does one construct examples of such games?  The only way we know is to start from discrete market game and to modify it somehow.
\item Here we obtained that the sequence of values $V_N$ of $N$-stage almost-fair piecewise game converges to some finite limit $V_\infty$ as $N\to\infty$. This suggests one to consider an infinite-stage version $\Gamma_\infty$ of such game. Do players have optimal strategies in $\Gamma_\infty$? For discrete market games the positive answer with explicit construction is given by Domansky~(2007)~\cite{Domansky2006}. 
In particular, he obtained that in the infinite market game Player~1 (behaving optimally) will reveal his private information in a finite time. Do we have such a counterintuitive effect for all almost-fair piecewise games? What are the properties of $V_\infty$? For market games it is a piecewise-linear function of $p$, see the paper of Domansky and the one of De~Meyer and Marino~(2005)~\cite{DeMMar}. 
What is the speed of $V_N$ convergence? Sandomirskaia~(2016)~\cite{MarExponential} 
showed that the speed is exponential for the discrete market model.

\item The role of uniqueness assumption in Theorem~\ref{th_lower} should be clarified. For example, are there any almost-fair games with $V_N$ growing to infinity but slower than $\sqrt{N}$? Does there exist a non-piecewise almost-fair game with bounded values?

\item Existence of the limit ${V_N}/{\sqrt{N}}$ as $N\to\infty$ for games with $V_N$ of the order of $\sqrt{N}$ is 
an important open question.
For the widest class of games considered in the literature, the result is conditional: existence is proved
under the assumption that  a certain partial-differential equation
has sufficiently regular solution (De~Meyer~(1996A)~\cite{DeM1996_1}).

\item To find a proper generalization of Theorem~\ref{th_upper} to infinite $\I$ and $\J$ and of Theorem~\ref{th_lower} to infinite
$\I, \J, \K$  seems to be interesting. So are  
generalizations of the theorems to games  that are not almost-fair or have incomplete information on both sides.

\end{itemize} 
We plan to discuss some of the questions raised
in subsequent publications.

\end{document}